\theoremstyle{definition}
\newtheorem{theorem}{Theorem}
\newcounter{propositioncnt}
\newtheorem{proposition}[propositioncnt]{Proposition}
\newcounter{lemmacnt}
\newtheorem{lemma}[lemmacnt]{Lemma}
\newcounter{corollarycnt}
\newtheorem{corollary}[corollarycnt]{Corollary}
\newtheorem{remark}{Remark}
\newcounter{definitioncnt}
\newtheorem{definition}[definitioncnt]{Definition}
\title{Economic Capacity Withholding Bounds of Competitive Energy Storage Bidders}
\author[1]{Xin Qin}
\author[1]{Ioannis Lestas}
\author[2]{Bolun Xu}
\affil[1]{University of Cambridge, Department of Engineering, Cambridge, CB2 1PZ, UK}
\affil[2]{Columbia University, Earth and Environmental Engineering, New York, NY 10027, USA}
\begin{abstract}
    \textit{Problem definition}: Economic withholding in electricity markets refers to generators bidding higher than their true marginal fuel cost, and is a typical approach to exercising market power. However, existing market designs require storage to design bids strategically based on their own future price predictions, motivating storage to conduct economic withholding without assuming market power. As energy storage takes up more significant roles in wholesale electricity markets, understanding its motivations for economic withholding and the consequent effects on social welfare becomes increasingly vital. 
    \textit{Methodology/results}: This paper derives a theoretical framework to study the economic capacity withholding behavior of storage participating in competitive electricity markets and validate our results in simulations based on the ISO New England system. We demonstrate that storage bids can reach unbounded high levels under conditions where future price predictions show bounded expectations but unbounded deviations. Conversely, in scenarios with peak price limitations, we show the upper bounds of storage bids are grounded in bounded price expectations. Most importantly, we show that storage capacity withholding can potentially lower the overall system cost when price models account for system uncertainties.
    \textit{Managerial implications:} Our paper reveals energy storage is not a market manipulator but an honest player contributing to the social welfare. It helps electricity market researchers and operators better understand the economic withholding behavior of storage and reform market policies to maximize storage contributing to a cost-efficient decolonization. 
\end{abstract}
\begin{document}

\flushbottom
\maketitle
%  Click the title above to edit the author information and abstract
\section{Introduction}
\noindent 
Energy storage participants are becoming key players in wholesale electricity markets.  In the California Independent System Operator (CAISO), the capacity of utility-scale battery energy storage has already exceeded 5~GW and is projected to surpass 10~GW in the upcoming five years~\citep{caiso_bat}. Following California's lead, Texas has emerged as the second-largest market in storage capacity terms~\citep{ercot_bat}. Consequently, the role of storage in the market has naturally transitioned from ancillary service markets to price arbitrage in wholesale markets, reflecting the significant capacity influx~\citep{us2021form}. It's now crucial to comprehend the factors influencing energy storage market participation. Such understanding is essential for formulating regulatory policies, addressing market power concerns, and crafting effective price signals~\citep{sioshansi2021energy}.

Energy storage participants employ bidding strategies that fundamentally differ from those of conventional generators. While existing competitive electricity market designs incentivize conventional generators to base their bids on fuel costs~\citep{kirschen2018fundamentals}, energy storage participants must integrate future opportunity values into their bid designs~\citep{harvey2001market}. This is primarily because their immediate operations are influenced by anticipated price outcomes, especially in real-time markets that only clear for the imminent hour~\citep{zheng2023energy, chen2021pricing}. Consequently, storage entities must operate with strategic foresight~\citep{bansal2022market} in the present market framework {and adopt economic capacity withholding strategies: This involves the storage strategically choosing not to discharge at a certain time period, by submitting bid prices higher than the market clearing price of this time period, despite the price is higher enough to recover storage’s discharge cost. Instead, the storage expects to discharge at a future higher price to maximize its profit.
} 
This perspective is also reflected in CAISO's recent market power mitigation manual, which acknowledges that storage bids should account for opportunity costs and should \textit{``include the highest (predicted) price, corresponding to the storage duration of the resource"}~\citep{caiso_es2}. {This approach is further evidenced by storage's practical bidding behavior in CAISO's electricity market\citep{caiso2024daily}.}

Due to the significant volatility inherent in real-time power system operations, it is impractical for storage participants to expect designing bids using accurate price predictions~\citep{jkedrzejewski2022electricity}. A more pragmatic approach to increase profit potential is thus to incorporate uncertainty price models into bid design~\citep{zheng2022arbitraging}. Hence, a pivotal consideration emerges with the growth of storage market share: How would uncertainty models employed in storage's bidding strategy affect their profit and the operating cost of the entire system? A good understanding of this problem can provide insights into future market mechanisms and regulatory framework designs.

This paper studies the economic capacity withholding behavior of storage participating in competitive electricity markets. 
{
Our study offers the following key contributions:
\begin{enumerate}
    \item  We introduce a theoretical framework to analyze the economic capacity withholding of energy storage motivated by price uncertainties. This is the first paper to systematically study how the uncertainty model impacts storage market actions. Despite much previous literature emphasizing that storage must consider price uncertainty models in price arbitrage, there is no study from the system operator or regulator’s perspective on how these uncertainty-based bidding strategies could motivate economic withholding. 
    \item We state that for storage, uncertainty price models would motivate increased economic withholding.
    We prove that under various classes of distribution models, higher price uncertainty results in higher withholding. We also derive the storage economic withholding bounds in cases when the electricity market prices are bounded and discuss variations of this model to price distribution functions, such as the dependence on price spike occurrences.
    % \item 
    \item We show there exists an alignment between economic withholding and social welfare convergence. While Harvey and Hogan~\citep{harvey2001market} pointed out that storage economic withholding cannot be equivalent to market power exercise, ours is the first study examining how storage economic withholding may positively reduce the system operating cost, facilitating convergence in social welfare and improving market efficiencies. 
    \item We validate our result using an agent-based market simulation with Monte-Carlo scenarios based on the ISO New England system.
\end{enumerate}
}

{
Our theoretical framework for analyzing storage withholding behavior follows these steps: We first formally quantify the economic capacity withholding practice of battery energy storage in Subsection 3.4. 
Then in Subsection 4.1, we prove the concavity of storage value function, demonstrating that storage market power is limited, and show the convexity of storage bids. Third, we investigate the impact of uncertainty on storage withholding, including unbounded withholding in Subsection 4.2 and bounded withholding in Subsection 4.3. We further explore the relationship between storage withholding and social welfare in Subsection 4.4. Finally, we use simulations to demonstrate the theoretical findings and the impact of storage withholding on system operation in Section 5.
}

{The remainder of the paper is organized as follows.}
Section~2 reviews related literature. Section~3 formulates the market clearing and storage bidding models {and clarifies the definition of economic capacity withholding.} Section~4 presents our theoretical results. Section 5 shows the simulation results of proposed theorems, and extends the proposed theorem to practical market settings, and Section 6 concludes this paper. Detailed proofs are provided in the Appendix.

\section{Backgrounds}
\label{literature}
\subsection{Uncertainty Management in Power Systems}
Employment of uncertainty models in power systems operations is increasingly critical due to rising forecast errors contributed by progressive deployments of intermittent renewables~\citep{jafari2020power}. The operational constraints, such as generator start-up/shut-down requirements, ramp speeds, and storage limits, limit the flexibility of most generators to adapt swiftly to immediate set-point changes~\citep{kirschen2018fundamentals}. Consequently, {both system operators and market participants in} operations must anticipate and accommodate these limitations with respect to uncertainty factors in power systems.

Over the past decade, researchers have proposed various approaches to address increasing uncertainties in power system scheduling utilizing methods such as stochastic optimization~\citep{wang2015real, zhao2015data}, robust optimization~\citep{bertsimas2012adaptive,lubin2015robust}.
{Recently, risk-aware methods have garnered increased attention and research efforts~\citep{ndrio2021pricing, chattopadhyay2002unit}, largely due to the significant breakthroughs on the analytical reformulation of chance constraints~\citep{dvorkin2019chance, xu2023chance, yang2021chance}}.
While methodologically diverse, a common assumption in these frameworks is the central role of the power system operator in uncertainty management. This responsibility entails the development of stochastic models for renewable energy supply and demand, establishing risk preferences for dispatch, and integrating additional objectives and constraints into the market-clearing model.

However, solely relying on a centralized approach may not be practical. Firstly, the complexity of many uncertainty models can substantially increase the computational burden of power system optimizations,  hindering their implementation over real-world systems~\citep{roald2023power}. Secondly, in deregulated power systems, where supply and demand are balanced through market clearing, the operator must maintain a neutral stance. Centralizing uncertainty management will inevitably cause market prices to be affected by the uncertainty preference set by the power system operators and may not capitalize on the advantages of distributed decision-making and risk assessment. It is thus increasingly important to analyze uncertainties in electricity markets as strategic interactions between market participants hedge against risks~\citep{bose2019some}.

\subsection{Economic Capacity Withholding of Storage}
{
From a broader perspective, the relationship between uncertainty and withholding can be traced back to the 1950s, when storage optimization problems, such as the warehouse problem, emerged as a key area in operations research with storage participants modeled as price-takers. For instance, the deterministic warehouse problem modeled as linear programming was studied by ~\cite{charnes1955warehousing}, \cite{bellman1956warehousing}, and~\cite{dreyfus1957warehouse} in which the solution to the optimization problem was straightforward: Participants would choose to either fully stock the warehouse, completely deplete it, or leave it unchanged. These research works laid the foundation for defining non-withholding behavior. Later, paper~\cite{charnes1966decision} extended the warehouse problem to incorporate stochastic price variations, with subsequent studies exploring various extensions (e.g.,~\cite{kjaer2008valuation} and~\cite{ref_r1_wu}), demonstrating that withholding can be a beneficial behavior under uncertainties.}

{In electricity markets,} capacity withholding is a strategy where generators conserve a portion of their capacity inaccessible to market clearings{~\citep{ferc2002_withhold, ferc2020economicwithholding}}. This can manifest in two ways: physical withholding, where a generator offers less capacity than available; and economic withholding, where a generator sets its offer prices above its actual marginal cost, leading to reduced capacity clearance at a specific market clearing price~\citep{ferc2002_withhold}. A direct consequence of capacity withholding is steeper supply curves, resulting in higher market prices and, thus, higher profit of generation units despite less cleared quantity. Capacity withholding is a common practice to exercise market power~\citep{kirschen2018fundamentals}.

While identifying physical capacity withholding is straightforward based on the rated unit capacity, economic capacity withholding is significantly more sophisticated regarding energy storage.   Harvey and Hogan~\citep{harvey2001market} have acknowledged that energy-limited resources (which would include today's energy storage) may withhold capacity in certain hours due to expectations of higher future prices, i.e., opportunity cost, and such behaviors should be distinguished from the exercise of market power in that ``efficient pricing would fully utilize the energy of the unit in the highest price hours over the period of the limitation''. Yet, they remarked that the lack of perfect foresight, hence uncertainty or imperfect judgment, ``could have a larger impact on reducing supply than the exercise of market power''. They also made an important statement that economic withholding of energy-limited resources is ``necessary to the efficient and reliable operation of the electric grid''. {Research papers in natural gas storage, such as~\citep{ref_r1_sec} and~\citep{ref_r1_wu} investigated the benefits of economic capacity withholding from participant side (referred to as optimal capacity management and rolling intrinsic heuristic policy improvement), showing its benefits under price uncertainties and storage constraints. }

It is worth noting that Harvey and Hogan's 2001 study primarily referenced liquefied natural gas (LNG) and pumped hydro resources. These resources held a relatively marginal presence in the electricity markets of that era, and their market impact was limited. Moreover, their operational characteristics differ considerably from today's battery energy storage. For instance, LNG storage incurs consistent fuel costs, while pumped hydro operations must account for environmental and water supply constraints~\citep{rehman2015pumped}. Hence, given the associated modeling complexity and the limited impact in practice, the paper concluded that perhaps the most realistic option is to ``exempt them (energy limited resources) from the no economic or physical withholding standard''. 

With the notable rise of battery energy storage in recent years, many studies have delved into the market behavior of these storage resources. A considerable body of research underscores how integrating uncertainty price models into storage bid designs can improve market profits, especially in the face of market and system fluctuations~\citep{zheng2022arbitraging, savkin2016constrained, krishnamurthy2017energy, taylor2017price, zuluaga2023data, jiang2015optimal, du2016managing} {and the negative price~\citep{ref_r1_zhou}}. Storage can also seek to exercise market power like conventional generators to drive up prices and gain more profits~\citep{ye2017strategic, nasrolahpour2016bidding}. Apparently, storage can combine uncertainties and market powers to maximize their market return using more sophisticated bidding models~\citep{wang2017optimal}. {However, due to the nature of the supply curve, storage's considerations of uncertainty not only affect their own actions but will also impact wholesale market clearing outcomes. The impact of these uncertainty-motivated bidding strategies on economic withholding remains a key concern for market regulators but has not been systematically analyzed in existing studies}. Recent studies have also investigated storage bidding behaviors and their consequent market ramifications from the system's perspective~\citep{qin2023role, gao2022multiscale}. {However, these studies assume that storage relies on deterministic forecasts of future prices in bidding strategies, overlooking the role of uncertainties and the potential for economic withholding arising from such considerations.}

\section{Formulation and Preliminaries}
We first present the market clearing model considered in this study, followed by the storage bidding model considering uncertain future price predictions. We also present some results that are prerequisites to our main results, including storage bid design and {definition of economic withholding}.

\subsection{Preliminary and Assumptions} \label{subsec.prelim}
{We consider the model of a typical two-stage wholesale electricity market consisting of day-ahead markets (DAM) and real-time markets (RTM), as shown in Figure~\ref{fig:intro_market}. The DAM is a \textit{forward} and is mixed-integer linear programming that determines generators' start-up or shut-down status over the next 24 hours. Once these binary decisions are made, the system operator reruns the linear part of the problem and uses the dual variable associated with the locational power balance constraint as the day-ahead market clearing prices. The RTM is a \textit{spot} market, during which the system operator fixes the generator start-up and shut-down status based on the DAM results and only solves a single‐period linear optimization problem to economically dispatch resources as in Subsection~\ref{subsec.marketclear}. Similar to DAM, the dual variables to the power balance constraint serve as the real-time market clearing prices. 
For clearance, we provide the detailed DAM and RTM models in the Appendix~\ref{appendix.market}.}

\begin{figure}[h]%
    \centering
    \includegraphics[trim = 0mm 0mm 0mm 0mm, clip, width = .55\columnwidth]{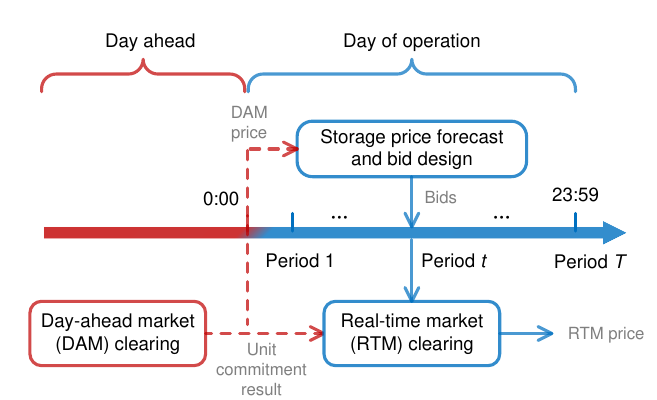}
    \caption{{Illustration of wholesale electricity market operation with storage participation in the RTM, exemplified at period $t$. Storage bid design is detailed in Subsection~\ref{subsec.esbid}, the RTM clearing problem is described in Subsection~\ref{subsec.marketclear}, and the DAM market clearing process is outlined in Appendix~\ref{appendix.market}.}}
    \label{fig:intro_market}
    \vspace{-1em}
\end{figure}

{A notable difference between DAM and RTM is that the DAM clears once over the next 24 hours, while the RTM clears at each operating time step. Then due to the need to reflect future opportunities, storage must design discharge and charge bids in RTM before the market clearing factoring future price predictions.}

{We use $\lambda_t$ to denote the RTM price. We assume storage adopts a price-taker bidding strategy. Hence, the storage participant does not seed to exercise market power and treat future prices as a stochastic process. The storage forecasts future prices} as a time-varying stage-wise independent process $\hat \lambda_t$ characterized by its expectation $\mu_t$ and {deviation} $\sigma_t$. The forecasting approach is outlined in the forth assumption below. Note that if not specified, $\hat \lambda_t$ can represent any uncertainty distribution.

To focus on the relationship between storage bidding models and system uncertainties, we consider the following assumptions.

\begin{enumerate}
    \item We consider {a single-bus system, which has many identical battery storage participants. We assume a competitive market, and participants act as price takers without intending to exercise market power individually, but they collectively influence the market.}
    \item {We consider each battery storage has a concave end-value function $V_T(e_T)$ representing the end value of energy stored. This assumption aligns with an intuitive economic sense: The more energy remains unsold at the end of a day, the less marginal value it should be.
    }
    \item {We consider storage participants adopts a finite forecast horizon (such as the end of the day), given the limited time horizon market information available to storage operators.}
    \item We consider the storage {forecasts the real-time prices $\hat \lambda_t$ based on the day-ahead market clearing price for the same time period \citep{tang2016model}, setting the day-ahead price as expectation $\mu_t$}. {This approach of real-time price forecasting is widely adopted in wholesale} markets facilitated by two factors. First, most producers and customers in electricity markets are settled in DAM, while RTM primarily settles the deviations to day-ahead settlements. Hence, real-time prices should be distributed around the day-ahead price. Second, virtual bids, in which market speculators can arbitrage between DAM and RTM, create incentives for participants to converge persistent price gaps between day-ahead to real-time.
\end{enumerate}

\subsection{Market Clearing Model} \label{subsec.marketclear}
We consider a typical two-stage energy market consisting of DAM and RTM. The DAM performs unit commitment over the next 24 hours, followed by sequential real-time dispatches.
Our study focuses on energy storage participation in RTM as shown in Figure~\ref{fig:intro_market}.
As a participant, storage submits bids to the system operator in a manner akin to a conventional generator~\citep{sakti2018review}. After receiving bids from storage and generator participants, the system operator solves the optimal dispatch problem~\eqref{eq:market} to clear the RTM, issuing both dispatch commands and releasing clearing prices.

For a time step $t\in\mathcal{T}$, the RTM clearing is thus formulated as 
\begin{subequations}\label{eq:market}
\begin{align}
    \min_{g_{t},~p_{t},~b_{t}} \; G_t(g_{t}) + {\sum_{\mathcal{S}}  O_t (p_t) - \sum_{\mathcal{S}} D_t(b_t)},
    \label{eq:obj}
\end{align}
where $\mathcal{T}= \{1,2,3,...,T\}$ indicates the set of periods, in which 1 and $T$ indicate the first and the last period of an operating day, respectively. {$\mathcal{S} = \{1,2,3,...,N_e\}$ indicates the set of storage participants, where $N_e$ denotes the number of storage participants.} $g_t$ is the aggregated power generation of conventional generators, and $G_t$ is the aggregated generation cost function. {$p_t$ and $b_t$ are the {power} discharged/charged into individual storage over time period $t$. $O_t$ and $D_t$ are the supply (discharge) offer cost function and the demand (charge) bidding function of each storage participant, respectively. Here in the theoretical analysis, cost functions $O_t(p_t) = \int o_t \ dp_t$ and $D_t(b_t) = \int d_t \ db_t$ are the integral of storage charge and charge price bids $o_t$ and $b_t$, respectively; In practice, system operators might require price-quantity segment offer curves~\citep{caiso2021energystorage}, i.e., piece-wise linear discharge and charge bids as detailed in Appendix~\ref{appendix.market}. Since the storage participants are assumed to be identical, we can omit indexing individual participants.}

The RTM includes the power balance constraint, as shown in equation~\eqref{eq:market_lam}. 
\begin{gather}
    {g_t + \sum_{\mathcal{S}} p_t - \sum_{\mathcal{S}} b_t = L_t - w_t,} \label{eq:market_lam}
\end{gather}
where $w_t$ is the renewable generation and $L_t$ is the load demand. Note that $L_t-w_t$ is the net demand and is also the source of \textit{uncertainty} in real-time power system operations. Furthermore, we define the dual variable linked to constraint~\eqref{eq:market_lam} as $\lambda_t$, which is the real-time market clearing price (real-time price).

The RTM models the storage unit constraints. Equation~\eqref{eq:obj-c1} {models the discharge power limit over period $t$, and  equation~\eqref{eq:obj-c2} models the charging power limit}
\begin{align}
    0\leq &p_t \leq \min\{P, {e_{t-1}\eta /\tau\}}, \label{eq:obj-c1}\\
    0\leq &b_t \leq \min\{P, { (E-e_{t-1}) / (\eta \tau )} \}.  \label{eq:obj-c2}
\end{align}
in which {$\tau$ is the time interval between two consecutive time periods.
$P$ is storage charging and discharging power limit, }$\eta$ indicates the storage efficiency. $e_t$ indicates storage State-of-Charge (SoC) at period $t$, and $E$ indicates full storage capacity.  Equations~\eqref{eq:obj-c1} and~\eqref{eq:obj-c2} say storage discharge energy over period $t$ should be the lesser of the limit $P$ and the energy left in the storage $e_{t-1}$; Similarly, equation~\eqref{eq:obj-c2} says storage charging energy should be the lesser of the limit $P$ and the charging headroom given full storage capacity $E$.

The market clearing is also subject to generator unit constraints, which are omitted in the main text for simplicity. For the full market clearing model formulation, including DAM and RTM that we used in our simulation, please refer to Appendix~\ref{appendix.market}.
\end{subequations}

\subsection{Storage Bidding Model} \label{subsec.esbid}
In this subsection, we elaborate on the bidding model of {an individual} storage participant, who acts as a price taker but takes price uncertainty into bidding design.

{Based on the forth assumption in Subsection~\ref{subsec.prelim},} all RTM price models explored in this paper have \textit{fixed} expectations in all bounded and unbounded cases, i.e., {RTM price expectation} $\mu_t$ is fixed. Then the storage arbitrage problem, {aiming at maximizing arbitrage profits over the considered time period,} can be formulated as stochastic dynamic programming (SDP) as follows
\begin{subequations}~\label{eq:esbid_opti}
\begin{align}
    Q_{t-1}(e_{t-1}|\hat \lambda_t) &:=  \max_{p_t, b_t}\hat \lambda_{t} (p_{t} - b_{t}) - cp_{t} + V_{t}(e_{t})\label{p1_obj}  \\
    V_{t-1}(e_{t-1}) &:= \mathbb{E}_{\hat \lambda_t }\Big[Q_{t-1}(e_{t-1}|\hat \lambda_t)\Big] \label{eq:VQ}
\end{align}
subject to
\begin{align}
    0 &\leq p_{t} \leq P \mathbf{1}_{[\hat \lambda_t \leq 0]}\label{eq:c1}\\
    0 &\leq b_{t} \leq P \label{eq:c2} \\
    e_{t} - e_{t-1} &= (-p_{t}/\eta + b_{t}\eta) \tau \label{eq:c3}\\
    0 &\leq e_{t} \leq E \label{eq:c4} 
\end{align}
\end{subequations}
where $c$ is the marginal physical discharge cost, {which is a parameter reflecting the comprehensive investment and operation cost per 1MW power discharged}. $V_{t}$ is the value-to-go function recursively defined as the optimized profit expectation in the future. {Equation~\eqref{eq:VQ} indicates $ V_{t-1}$ is the conditional expectation of $Q_{t-1}$.} \eqref{eq:c1} and \eqref{eq:c2} are the discharging and charging {power} constraints, respectively. Note in \eqref{eq:c1} that the indicator function on the upper {power} limit enforces the discharge {power} to be zero if the price is smaller than zero, hence preventing the battery from discharging at negative prices. This condition is sufficient to prevent simultaneous charge and discharge and will be an important property to utilize in our later results. \eqref{eq:c3} is the SoC efficiency constraint, and \eqref{eq:c4} is the energy capacity constraint. Also note that \eqref{eq:c1}--\eqref{eq:c4} are equivalent to \eqref{eq:obj-c1} and \eqref{eq:obj-c2}, which enforces power {subject} to efficiencies.

{
When participating in the RTM, as illustrated in Figure~\ref{fig:intro_market}, a storage participant submits discharging price bids ($o_t$) and charging price bids ($b_t$), analogous to those of a generator and a load~\citep{sakti2018review}. As a limited energy resource,  storage will incorporate both opportunity costs and physical costs in its bid design.}
{The physical cost depends on parameters like efficiency and marginal discharging costs $c$, while the opportunity cost is determined by the value function $V_t$ in~\eqref{eq:esbid_opti}. Thus, the storage discharging and charging cost functions are expressed as $C_{o,t} = c p_t - V_t(e_t)$ and $C_{d,t} = - V_t(e_t)$, respectively.} 

\begin{remark} \textbf{Storage opportunity bids.} Different from generators that bid based on physical costs, the storage design bids based on opportunity costs. For example,  as shown in equation~\eqref{eq:VQ}, storage value $V_{t-1}$ is dependent on its price prediction of the next period $\hat \lambda_{t}$ as well as the value function $V_{t}$ at period $t$. Similarly, $V_{t}$ is dependent on the price prediction and its value function of period $t+1$. Recursively, we know storage value at period $V_{t-1}$ is dependent over price forecasts of {future time} periods $t, t+1, t+2,..., T$. In practice, storage opportunity bids have been acknowledged by system operators like California ISO~\citep{caiso_es2}.
\end{remark}

\begin{definition}\label{def:es_bid}\textbf{Storage bid curve.}
   {Storage price bids for discharging and charging are formulated based on the marginal cost function of energy storage, encompassing both physical costs and opportunity costs:}
    \begin{subequations}~\label{eq:esbids_def}
    \begin{align}
        {o}_{t}(p_t) & = \Big[c+\frac{\tau}{\eta}v_{t}(e_{t-1}-\tau p_t/\eta)\Big]^+ \label{eq:bid_pd} \\
        {d}_{t}(b_t) &= \tau \eta v_{t}(e_{t-1} + b_t \tau \eta) \label{eq:bid_pc}
        % =  \Big[ \partial C_{o,t} (p_t)\Big]^+  =  \partial C_{d,t} (b_t) 
    \end{align}
    \end{subequations}
{where the discharging bid curve~\eqref{eq:bid_pd} is obtained by taking subderivative of discharge cost $C_{o,t}$ with respect to discharge power $p_t$. Similarly, the charging bid curve~\eqref{eq:bid_pc} is obtained by taking subderivative of $C_{d,t}$ with respect to $b_t$. A detailed derivation of equations~\eqref{eq:esbids_def} is given in Appendix~\ref{appendix.bid_derivation}}. $[x]^+ = \max\{0,x\}$ is used to reflect constraint \eqref{eq:c1} that no discharge power would be cleared during negative prices. $v_t(e_t)$ indicates the marginal value function, which is defined as
    \begin{align}
       & v_t(e_t) = \max_h~ \partial V_t(e_t)  \nonumber \\
           &~~ =\max_h~\cap_{x \in \textbf{dom} V_t} \{ h | {V_t}(x) \geq {V_t}(e_t) + h^T(x-e_t) \} \label{eq.defv}
    \end{align}
where $\partial V_t(e_t)$ is the subderivative of $V_t$ at $e_t$. 
{Given the convex function $-V_t(e_t)$, to address scenarios where $V_t$ is piecewise-defined,} we define $v_t(e_t)$ in equation~\eqref{eq.defv} as the maximum value of the subgradient at the breaking points where $V_t$ is not differentiable.
\end{definition}

The storage bid curves~\eqref{eq:esbids_def} show that the bid-offer curves consist of the physical discharge cost $c$ and the opportunity value component based on the marginal value function $v_t$. Note that in cases when markets require price-quantity segment offer curves {in practice}, the bid function can be discretized based on power segments.
Also note that in the RTM, a storage participant is expected to bid as a generator and a load~\citep{sakti2018review}. Consequently, in Definition~\ref{def:es_bid}, the storage bids represented by the functions $o_t(p_t)$ for discharging and $d_t(b_t)$ for charging are formulated based on the {discharging and charging power}, respectively, instead of focusing on the SoC $e_t$. Moreover, when making bids for period $t$, storage has its SoC $e_{t-1}$ at period $t-1$, so the equations in equation~\eqref{eq:esbids_def} are formulated with $e_{t-1}$.

\subsection{Economic Capacity Withholding}

{
Economic capacity withholding~\citep{ferc2020economicwithholding} of energy storage indicates a storage participant strategically chooses not to discharge at a certain time period by submitting bid prices higher than the market clearing price of this time period despite the price being higher enough to recover the storage’s discharge cost.}  
{Consider an example: A battery storage system with a marginal discharge cost of 30 \$/MWh and an efficiency $\eta =1$. charged when the electricity price was low, at 20 \$/MWh. Later, with a forecast price of 60 \$/MWh, this storage can earn profits by discharging. However, it strategically submits a high bid, such as 200 \$/MWh, to effectively avoid discharging. Instead, the storage strategically waits to discharge during a period when the expected future price exceeds 100 \$/MWh. By submitting a reasonable bid during that period, the storage achieves a significantly higher profit margin by its economic capacity withholding behavior.}

{We define the economic capacity withholding in a quantitative way.} 
We establish our framework on the perspective of Harvey and Hogan~\citep{harvey2001market} that states the economic capacity withholding intentions of energy-limited resources can only be identified after-the-fact but must also account for the imperfect foresight. Firstly, we consider the non-withholding capacity bids {as a baseline to establish our definition of economic capacity withholding. Non-withholding bids are characterized as constant bids under the following conditions:} 1) It does not assume market power, 2) it assumes deterministic future price forecast, and 3) the final SoC value is also a linear function. {This baseline serves as a reference for identifying deviations indicative of economic withholding. Secondly, We define energy storage capacity withholding as happens if it submits a bid higher than the non-withholding capacity bids defined in Proposition~\ref{pro:nwb}. }

\begin{proposition}\label{pro:nwb}\textbf{Deterministic storage bids.}
{Consider storage forecast price $\hat \lambda_t$ characterized by its expectation $\mu_t$ and {deviation} $\sigma_t$.}  If $\sigma_t^2=0$ in equation~\eqref{eq:esbid_opti}, then storage bid curves  $o_t(p_t)$ and $d_t(b_t)$ will be constant if $V_{T}(e_{T})$ is a linear function. Hence, {in the RTM clearing problem \eqref{eq:market},} $O_t(p_t)$ and $D_t(b_t)$ as are linear.
\end{proposition}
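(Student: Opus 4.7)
The plan is to prove Proposition~\ref{pro:nwb} by a backward induction showing that the value-to-go $V_t$ remains linear on $[0,E]$ for every $t\le T$, then read off the constancy of the bid curves directly from the marginal-value formulas in Definition~\ref{def:es_bid}. First, since $\sigma_t^2 = 0$, the forecast $\hat{\lambda}_t$ degenerates to its mean $\mu_t$, so the outer expectation in~\eqref{eq:VQ} is vacuous and the recursion collapses to the deterministic Bellman $V_{t-1}(e_{t-1}) = Q_{t-1}(e_{t-1}\mid\mu_t)$.

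The base case is the hypothesis $V_T(e_T) = \alpha_T e_T + \beta_T$, so $v_T\equiv\alpha_T$ is constant. For the inductive step, assume $V_t(e_t) = \alpha_t e_t + \beta_t$ on $[0,E]$. Substituting the SoC transition~\eqref{eq:c3} into~\eqref{p1_obj} and collecting terms yields
\begin{align*}
V_{t-1}(e_{t-1}) = \alpha_t e_{t-1} + \beta_t + \max_{p_t,b_t}\Bigl\{\bigl(\mu_t - c - \tfrac{\alpha_t\tau}{\eta}\bigr)p_t + \bigl(-\mu_t + \alpha_t\tau\eta\bigr)b_t\Bigr\}
\end{align*}
subject to~\eqref{eq:c1}--\eqref{eq:c4}. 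The inner linear program has coefficients independent of $e_{t-1}$, and the round-trip efficiency loss (together with $c\ge 0$ and $\eta\le 1$) rules out a simultaneously profitable charge and discharge. Its optimizer is therefore bang-bang, triggered by a price threshold on $\mu_t$ rather than by $e_{t-1}$, so on the interior region where the SoC bounds~\eqref{eq:c4} are slack $V_{t-1}$ inherits a linear form with some slope $\alpha_{t-1}$, closing the induction and giving $v_{t-1}$ constant.

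With $v_t$ constant at every stage, the bid formulas~\eqref{eq:bid_pd}--\eqref{eq:bid_pc} evaluate to fixed scalars $o_t(p_t) = [c + (\tau/\eta)\alpha_t]^+$ and $d_t(b_t) = \tau\eta\alpha_t$, independent of $p_t$ and $b_t$; integrating with respect to $p_t$ and $b_t$ recovers linear cost functions $O_t$ and $D_t$ in the clearing problem~\eqref{eq:market}. The main obstacle is the inductive step at the SoC boundaries: when~\eqref{eq:c4} binds, $V_{t-1}$ is strictly piecewise linear with a kink at $e_{t-1}=P\tau/\eta$ or $E-P\tau\eta$, so $v_{t-1}(e_{t-2}-\tau p_{t-1}/\eta)$ can cross this kink as $p_{t-1}$ sweeps $[0,P]$. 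A careful proof must either restrict attention to the interior — which is precisely the non-withholding baseline the proposition is designed to formalize (no market power, deterministic prices, linear end-effect) — or verify that the two constant pieces coalesce under the stated hypotheses; I would pursue the first route, treating boundary corrections as higher-order effects that the subsequent subsections contrast with genuine uncertainty-driven withholding.
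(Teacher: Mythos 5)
Your route --- backward induction on global linearity of $V_t$ --- is genuinely different from the paper's, and the obstacle you flag at the end is not a boundary technicality: it is where the induction actually breaks. Already at the first backward step, with $V_T(e_T)=\alpha_T e_T$ and a deterministic price $\mu_T$ at which discharging is profitable, the inner maximization gives
\begin{align*}
V_{T-1}(e_{T-1}) \;=\; \alpha_T e_{T-1} \;+\; \Bigl(\mu_T - c - \tfrac{\alpha_T\tau}{\eta}\Bigr)^{\!+}\min\Bigl\{P,\ \tfrac{e_{T-1}\eta}{\tau}\Bigr\} \;+\; \Bigl(\alpha_T\tau\eta-\mu_T\Bigr)^{\!+}\min\Bigl\{P,\ \tfrac{E-e_{T-1}}{\eta\tau}\Bigr\},
\end{align*}
which is piecewise linear with a kink at $e_{T-1}=P\tau/\eta$ (and symmetrically at $E-P\tau\eta$). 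So $v_{T-1}$ is a step function rather than a constant, the induction hypothesis fails after one step, and restricting to the ``interior'' does not rescue the conclusion: the bid curve $o_t(p_t)=[c+\tfrac{\tau}{\eta}v_t(e_{t-1}-\tau p_t/\eta)]^+$ sweeps the argument of $v_t$ over a range of SoC values as $p_t$ varies, and that range can straddle the kink. Calling the kinks ``higher-order effects'' is not available either --- they encode exactly the first-order distinction between power-limited and energy-limited dispatch, which is the structure the proposition is about.

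The paper avoids the value function entirely. It observes that with $\sigma_t^2=0$ the stochastic program \eqref{eq:esbid_opti} collapses to a single deterministic linear program over the whole horizon, invokes the simplex characterization that LP optima sit at polyhedral vertices (or convex combinations thereof), and concludes that at every period the optimal dispatch is bang-bang: full charge, idle, or full discharge. Perturbing the price of one period while holding the others fixed therefore only switches the dispatch among these extremes at a threshold price, and a step quantity-versus-price relation is equivalent to offering the full feasible capacity at a single constant price --- the claimed constant bid curves and linear $O_t$, $D_t$ in \eqref{eq:market}. If you want to keep an inductive argument, it has to be phrased at the level of that threshold (the price at which the vertex solution switches), not at the level of global linearity of $V_t$; as written, your inductive claim is false in general and the conclusion does not follow from it.
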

\begin{proof}
~The proof of this proposition is based on using the Simplex algorithm~\citep{dantzig2003linear}, which states the optimal solution of linear programming must be on a polyhedral vertex, {or at a convex combination of multiple such vertices}. Note that if there is no uncertainty {in storage forecast price $\hat \lambda_t$}, then \eqref{eq:esbid_opti} becomes equivalent to the following linear programming formulation with a strict linear objective function (not piece-wise linear) 
\begin{align}
    \max_{p_t, b_t} \sum_{t=1}^{T} \hat \lambda_{t} (p_{t} - b_{t}) - cp_{t} + V_{T}(e_{T})
\end{align}
subject to \eqref{eq:c1}--\eqref{eq:c4} for all $t\in\mathcal{T}.$
Then $p_t$ and $b_t$ must be {bounded} by one of the constraints, including the upper/lower energy limits and the SoC limits factoring in efficiency. Thus, for all price predictions $\hat \lambda_t$ and beginning SoC $e_{t-1}$ for all time period $t\in\mathcal{T}$, the storage dispatch solution will be either charging at $b_t = \min\{P, (E-e_{t-1})/(\eta \tau) \}$, idle ($p_t=0$, $b_t = 0$), or discharging at $p_t = \min\{P,  e_{t-1}\eta /\tau\}$. Then there is no slope between charge or discharge energy to the price $\hat \lambda_t$. In other words, if we fix all other prices and only perpetuate the price realization of any particular time interval, it will cause the storage at that time period to switch between fully charging, idle, and fully discharging. This is equivalent to a constant supply or bid curve, as the storage will offer full capacity if the price is above or below a certain threshold. Hence, we finished the proof.
\end{proof}

{Given the definition of economic capacity withholding, battery energy storage can assume one of two distinct roles in the market. As an ``honest player," storage seeks to accurately predict future electricity prices and their associated uncertainties, designing bids based on these forecasts without attempting to exert market power. In this role, the storage treats predicted prices and uncertainties as exogenous variables, assuming its actions do not influence market prices or their distribution.
From a market regulator's perspective, an ``honest player'' will either not implement economic capacity withholding, or its withholding will not lead to the exercise of market power.
Conversely, a ``market manipulator'' deviates from this honest behavior by actively seeking to exercise market power, deliberately influencing electricity prices to maximize profits. }

\section{Main Results} \label{sec.main_result}
 {Our proposed framework is capable of analyzing both supply (discharging) and demand (charging) withholding in storage, as these are symmetrical processes with opposite directions in the propagation of value functions with respect to the SoC. For clarity, our theoretical results primarily focus on \textit{supply} (discharging) withholding,} since in electricity markets, as capacity withholdings typically refer to supply offers. {However, the same framework can be readily applied to analyze storage charging (demand) withholding as well.}

{In this section, We first demonstrate the} concavity of storage value functions across all potential price distributions – a crucial property referenced in subsequent findings.
Then we show if the price maintains a consistent expectation but has an \textit{unbounded} distribution, the corresponding storage bid also remains unbounded.
Conversely, with a \textit{bounded} price distribution defined by upper and lower price thresholds, the storage bid will also bounded, and we will detail its upper bound formulation.
We'll further explore various corollaries tied to different price distribution scenarios.
Lastly, we'll comment on the interplay between storage bidding uncertainty models and overall market welfare.

\subsection{{Convex Storage Bids}} \label{subsec.convexbid}
{The system operator is concerned about the convexity of the RTM clearing problem~\eqref{eq:market} and the market power of energy storage, given the awareness that storage participants may engage in economic capacity withholding. Therefore, it is crucial to investigate the convexity of the cost functions of storage bids, $O_t(p_t)$ and $-D_t(b_t)$.  This requires examining the characteristics of the marginal value function $v_t(e_t)$, which decides storage bids $o_t(p_t)$ and $d_t(b_t)$}. 

We now show the value function $V_t(e_t)$ will always be concave for any price distributions given a concave and differentiable end value function $V_T(e_T)$, and the storage bids will always be convex.
\begin{proposition}\label{pro:concave}\textbf{Concave value function.} 
Given a concave end value function $V_{T}(e_T)$,  then $V_{t}(e_{t})$ is concave for all $t\in\mathcal{T}^{'} = \{1,2,3,...,T-1\}$ and for all price distribution functions
$\hat \lambda_t$.
\end{proposition}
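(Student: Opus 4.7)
The plan is to prove the proposition by backward induction on $t$, exploiting the standard fact that partial maximization of a jointly concave function over a convex constraint set yields a concave function of the remaining parameters. The base case $t=T$ is immediate, since $V_T(e_T)$ is concave by hypothesis.

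For the inductive step, suppose $V_t(e_t)$ is concave, and rewrite
\begin{equation*}
Q_{t-1}(e_{t-1}\mid\hat\lambda_t) \;=\; \max_{p_t,b_t,e_t}\;\hat\lambda_t(p_t-b_t)-c\,p_t+V_t(e_t)
\end{equation*}
subject to~\eqref{eq:c1}--\eqref{eq:c4}. I would first establish, for each realization of $\hat\lambda_t$, that $Q_{t-1}(\cdot\mid\hat\lambda_t)$ is concave in $e_{t-1}$, and then take the expectation. Fix $\hat\lambda_t$. The objective is the sum of a linear function of $(p_t,b_t)$ and the concave function $V_t(e_t)$, hence jointly concave in $(e_{t-1},p_t,b_t,e_t)$. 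The feasible set carved out by \eqref{eq:c1}--\eqref{eq:c4} is polyhedral in these same variables: the discharge and charge bounds are linear once $\hat\lambda_t$ is fixed (the indicator in \eqref{eq:c1} is then a constant), the SoC recursion \eqref{eq:c3} is affine, and \eqref{eq:c4} is a box constraint. The partial-maximization rule (if $f$ is jointly concave and $C$ is convex, then $g(y)=\sup_{x:(x,y)\in C}f(x,y)$ is concave in $y$) delivers concavity of $Q_{t-1}(\cdot\mid\hat\lambda_t)$ on $[0,E]$; feasibility is never an issue because $p_t=b_t=0$ together with $e_t=e_{t-1}$ is always admissible.

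Finally, since expectation is a nonnegative linear operation that preserves pointwise concavity, $V_{t-1}(e_{t-1})=\mathbb{E}_{\hat\lambda_t}[Q_{t-1}(e_{t-1}\mid\hat\lambda_t)]$ inherits concavity in $e_{t-1}$, closing the induction. The only genuine obstacle is making the partial-maximization argument go through uniformly in $\hat\lambda_t$: the feasible set must remain convex for every realization, and $Q_{t-1}(\cdot\mid\hat\lambda_t)$ must be integrable so that the expectation is well defined and the concavity inequality survives the integral. Both conditions are easy to check here, because the polyhedral structure is independent of $\hat\lambda_t$ apart from the indicator in \eqref{eq:c1}, which merely toggles the discharge cap between $0$ and $P$ and preserves convexity either way, and $Q_{t-1}(\cdot\mid\hat\lambda_t)$ is bounded by an affine function of $|\hat\lambda_t|$ plus $\sup_{e\in[0,E]}V_t(e)$, ensuring integrability whenever $\hat\lambda_t$ has a finite first moment.
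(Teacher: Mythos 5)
Your proof is correct, and it takes a genuinely different route from the paper's. The paper establishes concavity of $Q_{T-1}(\cdot\mid\hat\lambda_T)$ by writing the KKT conditions, enumerating five operating regimes (full discharge, partial discharge, idle, partial charge, full charge), computing the envelope derivative $\pdv{Q_{T-1}}{e_{T-1}}$ in each regime, checking that the second derivative is nonpositive case by case, and then separately patching the non-differentiable breaking points with a subdifferential argument before taking the expectation and recursing backward in time. You instead invoke the standard fact that partial maximization of a jointly concave function over a convex (here polyhedral) feasible set is concave in the retained variable, which disposes of the entire inductive step in one stroke: no differentiability of $V_t$ is required, the case enumeration disappears, and the breaking points need no special treatment. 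Your two caveats are exactly the right ones and both check out: the indicator in \eqref{eq:c1} is a constant once $\hat\lambda_t$ is realized, so the feasible set stays polyhedral for every realization, and integrability holds whenever $\hat\lambda_t$ has a finite first moment, which the paper implicitly assumes throughout by positing an expectation $\mu_t$ (for genuinely heavy-tailed distributions the proposition's ``all price distributions'' claim would degenerate to $V_{t-1}\equiv+\infty$, a point the paper does not address either). What the paper's longer computation buys is the explicit piecewise expression for the marginal value in each operating regime, which feeds directly into the formula for $q_t(0\mid\hat\lambda_{t+1})$ reused in the proofs of Lemma~\ref{lem:ubd_pre} and Theorems~\ref{the:ubd} and \ref{the:bound}; your argument proves the stated proposition more cleanly and more generally, but does not produce that byproduct.
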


Proposition~\ref{pro:concave} states that the value of energy stored is a concave function that follows the law of diminishing returns, hence, the more energy stored, the less marginally valuable they are. We defer the complete proof to Appendix~\ref{appendix.concave}.

\begin{corollary}\label{cor:convexbids} \textbf{Convex storage bids.} 
    Given a concave end value function $V_{T}(e_T)$, ${o}_t(p_t)$  monotonically increases with $p_t$, ${d}_t(b_t)$ monotonically decreases with $b_t$. {Then $O_t(p_t)$ is convex, and $D_t(b_t)$ is concave}, hence the market clearing model in \eqref{eq:market} is always convex.
\end{corollary}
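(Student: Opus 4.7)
The plan is to leverage Proposition~\ref{pro:concave}, which guarantees that $V_t(e_t)$ is concave for every $t$, and then translate this concavity into monotonicity of the marginal value function $v_t(e_t)$. For a concave function the superdifferential $\partial V_t(e_t)$ is a non-increasing set-valued map in $e_t$; since equation~\eqref{eq.defv} defines $v_t(e_t)$ as the maximum element of that superdifferential, $v_t(e_t)$ is itself a non-increasing function of $e_t$, including at kinks where $V_t$ fails to be differentiable. This monotonicity is the workhorse for the two monotonicity claims of the corollary.

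With $v_t$ non-increasing in hand, I would read off monotonicity of the bid curves directly from their definitions in \eqref{eq:esbids_def}. In the discharging bid $o_t(p_t)=[c+(\tau/\eta)\,v_t(e_{t-1}-\tau p_t/\eta)]^+$, increasing $p_t$ lowers the argument of $v_t$, which in turn raises $v_t$, and since $[\,\cdot\,]^+$ is itself non-decreasing, the composition $o_t$ is non-decreasing in $p_t$. Symmetrically, in the charging bid $d_t(b_t)=\tau\eta\,v_t(e_{t-1}+b_t\tau\eta)$, increasing $b_t$ raises the argument of $v_t$, which lowers $v_t$, so $d_t$ is non-increasing in $b_t$. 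This gives both monotonicity statements.

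The convexity conclusions then follow by integration: as noted right after~\eqref{eq:obj}, $O_t(p_t)=\int o_t\,dp_t$ and $D_t(b_t)=\int d_t\,db_t$. An indefinite integral of a non-decreasing function is convex, so $O_t$ is convex; an indefinite integral of a non-increasing function is concave, so $D_t$ is concave, and equivalently $-D_t$ is convex. Assuming the conventional generation cost $G_t$ is convex, the RTM objective~\eqref{eq:obj} becomes a sum of convex terms, while the constraints \eqref{eq:market_lam}--\eqref{eq:obj-c2} are linear, so the clearing problem is convex.

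The one subtlety I would take care with is the choice of the \emph{maximum} subgradient at breakpoints of $V_t$: one has to confirm that this upper-envelope selection preserves the non-increasing property of the correspondence $\partial V_t$. This follows from the standard fact that for a concave function the superdifferential is a monotone set-valued map whose upper envelope is likewise non-increasing, so no technical gap arises when $V_t$ is merely piecewise-smooth rather than differentiable.
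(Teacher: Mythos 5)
Your proposal is correct and follows essentially the same route as the paper: invoke Proposition~\ref{pro:concave} for concavity of $V_t$, deduce that $v_t$ is non-increasing, read off the monotonicity of $o_t$ and $d_t$ from \eqref{eq:esbids_def}, and integrate to get convexity of $O_t$ and concavity of $D_t$. Your added care about the maximum-subgradient selection at breakpoints is a detail the paper leaves implicit, but it does not change the argument.
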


This corollary is trivial to show based on Proposition~\ref{pro:concave}, which proves $V_t(e_t)$ is always concave. Then $v_t(e_t)$ is a monotonically decreasing function. Then we can finish the proofs according to \eqref{eq:esbids_def}.

At this point, we have shown if {the storage value function at the last period $V_{T}(e_T)$ is concave}, then all its bid curves will fit the convex market clearing model under stochastic bid design objectives under any assumed price distributions.

\subsection{Unbounded Economic Capacity Withholding} \label{subsec.ubd}
{In certain deregulated electricity markets, RTM prices can soar to extremely high levels even when day-ahead prices remain in a reasonable range. For instance, during the Texas power crisis in 2019, RTM prices reached approximately 9,000\$/MWh \citep{wikipedia2021texaspowercrisis}, a price later upheld by the state supreme court \citep{txcourts2024casesummaries}. Similar examples have also been observed in Australia \citep{aemo2016highenergypricesa}. These cases underscore the theoretical importance of understanding storage capacity withholding behavior in scenarios where RTM prices exhibit significant or unbounded deviations from DAM prices. However,}
traditional market price analysis focuses solely on the price expectations (i.e., average price values), but there are fewer concerns over the price deviations, hence the volatility. Our results {first show} that even with a given price expectation, the storage economic capacity withholding, i.e., the bid values, can be arbitrarily high based on the price standard deviation when assuming the price follows a Gaussian distribution. 

\begin{theorem}\label{the:ubd}\textbf{Unbounded withholding with Gaussian distributions.} 
Let the concavity of $V_T(e_T)$ hold. Given arbitrary time interval $t \in \mathcal{T}^{'}$, assume the storage participant anticipates future prices $\hat \lambda_{t+1}$ following Gaussian distributions with the fixed expectation $\mu_{t+1}$ but unbounded {deviation} $\sigma_{t+1}$. For a given price expectation over the next time period $\mu_{t+1}$, for an arbitrary bid value $\theta \geq 0$, 
there exists a standard deviation $\sigma_{t+1} \geq \underline{\sigma}$ such that
\begin{align}
v_{t}(0) \geq \theta.~\label{eq.thm1_lbd}
\end{align}
where constant $\underline{\sigma}$ represents the lower bound essential for always ensuring the validity of inequality~\eqref{eq.thm1_lbd} at all time periods.
\end{theorem}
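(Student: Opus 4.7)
The plan is to combine the concavity of $V_t$ established in Proposition~\ref{pro:concave} with an explicit suboptimal policy whose expected profit dominates $V_t(0)$ by an amount proportional to $\mathbb{E}[(\hat\lambda_{t+1}-c)^+]$. Since $V_t$ is concave on $[0,E]$, its right derivative at the boundary point $0$ coincides with $v_t(0)$, so for every sufficiently small $\delta>0$,
\begin{equation*}
v_t(0) \;\geq\; \frac{V_t(\delta) - V_t(0)}{\delta}.
\end{equation*}
The task is therefore reduced to producing a linear-in-$\delta$ lower bound on $V_t(\delta)-V_t(0)$ whose coefficient diverges as $\sigma_{t+1}\to\infty$.

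For the suboptimal policy starting at $e_t=\delta$, I would shadow the optimal policy $(p^*_s,b^*_s)$ from the $e_t=0$ trajectory at every period $s\geq t+2$ and modify only period $t+1$. On the event $\{\hat\lambda_{t+1}\geq c\}$, I discharge the extra $\delta$ units of stored energy (i.e., set $p_{t+1}=p^*_{t+1}+\delta\eta/\tau$), which drives $e_{t+1}$ back onto the $e_t=0$ trajectory so all continuation values coincide exactly. On the complementary event I leave the extra energy in storage, so $e_{t+1}$ is displaced by $+\delta$ above the $e_t=0$ trajectory. Feasibility of the added discharge is immediate because $p^*_{t+1}=0$ when $e_t=0$ and $\delta$ may be taken as small as desired. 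The first event contributes additional net revenue $\eta(\hat\lambda_{t+1}-c)\delta$; the second event contributes zero extra revenue and a non-negative continuation-value increment by monotonicity of $V_{t+1}$. Taking expectations,
\begin{equation*}
V_t(\delta)-V_t(0) \;\geq\; \eta\,\delta\,\mathbb{E}\!\bigl[(\hat\lambda_{t+1}-c)^+\bigr],
\end{equation*}
so $v_t(0)\geq \eta\,\mathbb{E}[(\hat\lambda_{t+1}-c)^+]$.

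To finish, I would invoke the standard Gaussian mean-excess identity: for $\hat\lambda_{t+1}\sim\mathcal{N}(\mu_{t+1},\sigma_{t+1}^2)$,
\begin{equation*}
\mathbb{E}\!\bigl[(\hat\lambda_{t+1}-c)^+\bigr] \;=\; (\mu_{t+1}-c)\,\Phi\!\Bigl(\tfrac{\mu_{t+1}-c}{\sigma_{t+1}}\Bigr) + \sigma_{t+1}\,\phi\!\Bigl(\tfrac{\mu_{t+1}-c}{\sigma_{t+1}}\Bigr),
\end{equation*}
which grows asymptotically like $\sigma_{t+1}/\sqrt{2\pi}$ as $\sigma_{t+1}\to\infty$ for $\mu_{t+1}$ fixed. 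Hence for any target $\theta\geq 0$ there is a finite $\underline{\sigma}(\theta,\mu_{t+1},c,\eta)$ such that $\sigma_{t+1}\geq\underline{\sigma}$ implies $v_t(0)\geq\theta$; taking the maximum of these thresholds over $t\in\mathcal{T}^{'}$ produces a single uniform lower bound as stated.

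The main obstacle will be justifying the ``hold'' case, which requires monotonicity of $V_{t+1}$ in its SoC argument. I would establish this by backward induction from $V_T$ (which is taken to be non-decreasing as a natural economic assumption consistent with concavity and $v_T\geq 0$), using the fact that any feasible policy from a lower starting SoC can be shadowed from a higher one without violating the upper capacity bound, provided one is allowed to forego a corresponding amount of charging. A secondary technical point is ensuring that the extra discharge $\delta\eta/\tau$ respects the power cap $P$; this is automatic since the subgradient inequality is extracted in the limit $\delta\to 0^+$.
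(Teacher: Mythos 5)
Your argument is correct in substance but follows a genuinely different route from the paper. The paper's proof works with the closed-form conditional marginal value $q_t(0|\hat\lambda_{t+1})$ in \eqref{eq:q0_nik} (imported from Zheng et al.), lower-bounds $v_t(0)=\mathbb{E}[q_t(0|\hat\lambda_{t+1})]$ by explicit integrals against the Gaussian density, and then differentiates that bound with respect to $\sigma_{t+1}$ to exhibit an explicit threshold $\underline{\sigma}=\sqrt{-(v_{t+1}(E_c)-2c)(\mu_{t+1}-c)}$ beyond which the bound is monotonically increasing. You instead bypass the closed form entirely: a coupling/perturbation argument on the value function gives $v_t(0)\geq \eta\,\mathbb{E}[(\hat\lambda_{t+1}-c)^+]$, and the Gaussian mean-excess identity then yields divergence as $\sigma_{t+1}\to\infty$. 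Your route is more elementary and more portable (it would immediately give Corollary~\ref{cor:ubd_general}-type statements for any distribution whose mean excess diverges), at the cost of producing a less explicit $\underline{\sigma}$; note also that the paper's $\underline{\sigma}$ is the threshold for monotonicity of the bound in $\sigma_{t+1}$, whereas yours is the threshold for reaching $\theta$ itself, which is arguably closer to the literal theorem statement.

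Two points need tightening. First, the ``hold'' branch requires $V_{t+1}(e^*_{t+1}+\delta)\geq V_{t+1}(e^*_{t+1})$, and monotonicity of $V_{t+1}$ in SoC is not free: with Gaussian prices, negative-price charging is profitable and a higher SoC shrinks charging headroom, so a crude shadowing argument at later periods can lose value when the capacity constraint binds. The clean fix is exactly the paper's Lemma~\ref{lem:ubd_pre}: under $v_T\geq 0$ and $E\geq 2E_c$ one gets $v_{t+1}(x)\geq v_{t+1}(2E_c)\geq 0$ on the relevant range $x\leq E_c+\delta\leq 2E_c$ (the trajectory from zero SoC can climb at most to $E_c$ in one step), which combined with Proposition~\ref{pro:concave} gives the needed non-negative increment. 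You should invoke that lemma rather than posit monotonicity as an economic assumption; this also makes explicit that your proof, like the paper's, uses hypotheses beyond bare concavity of $V_T$. Second, in the ``discharge'' branch you should handle the case $b^*_{t+1}>0$ (reduce charging by the equivalent amount rather than discharge on top of it) to avoid simultaneous charge and discharge, which the model only excludes implicitly; the resulting revenue gain is at least as large, so the bound is unaffected.
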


Theorem~\ref{the:ubd} asserts that unbounded price distributions can render storage economic capacity withholding $o_t$ in boundless. This theorem goes beyond the conventional analysis of market prices, highlighting a crucial nuance: Even when storage anticipates a fixed price expectation for the next period, the existence of an unbounded deviation can result in an unbounded marginal value function for storage. Consequently, this leads to unbounded withholding in RTM.

The proof of Theorem~\ref{the:ubd} is based on the following Lemma. {For clarity, consider storage marginal value at SoC level $E_c = \tau P \eta$, which indicates the SoC increment resulting from charging at full power $P$ over a duration of $\tau$.}
\begin{lemma}\label{lem:ubd_pre}
   {Let the concavity of $V_T(e_T)$ hold.} Given $v_T(e_T)\geq 0,~\forall e_T \in [0,~E]$ and the storage's {capacity} $E \geq {2 E_c}$, then $v_t(E_c)\geq 0, \forall t \in \mathcal{T}^{'}$.
\end{lemma}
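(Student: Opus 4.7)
The plan is a backward induction on $t$ from $T-1$ down to $1$. By Proposition~\ref{pro:concave} each $V_t$ is concave on $[0, E]$, so the set $\partial V_t(E_c)$ is a closed interval, and the definition $v_t(E_c) = \max \partial V_t(E_c)$ selects the left derivative of $V_t$ at $E_c$. Hence $v_t(E_c) \geq 0$ is equivalent to $V_t(E_c) \geq V_t(E_c - \delta)$ for all sufficiently small $\delta > 0$, which I will establish by exhibiting, at initial state $E_c$, an admissible policy whose expected value matches or exceeds the optimum at $E_c - \delta$. The hypothesis $E \geq 2E_c$ enters here: at any SoC $e \in [0, E_c]$ the charging headroom $(E - e)/(\eta\tau) \geq E_c/(\eta\tau) = P$, so the upper SoC constraint is slack and the effective charging cap reduces to the physical limit $P$, while the discharging cap $e\eta/\tau$ is monotone in $e$. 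Raising $e$ from $E_c - \delta$ to $E_c$ therefore only enlarges the per-period feasible set on this range.

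For the base case $t = T-1$, I will take any optimizer $(p^*, b^*)$ at $e = E_c - \delta$ for each realization $\hat\lambda_T$, observe that it remains feasible at $e = E_c$ by the monotonicity above, and apply it there; this yields a terminal SoC equal to the $(E_c - \delta)$-trajectory's terminal SoC plus $\delta$. Since $v_T(e) \geq 0$ on $[0, E]$, $V_T$ is non-decreasing, so this shifted policy achieves at least $V_{T-1}(E_c - \delta)$, and taking expectation over $\hat\lambda_T$ gives $V_{T-1}(E_c) \geq V_{T-1}(E_c - \delta)$.

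For the inductive step, I will assume $v_{t+1}(E_c) \geq 0$; by concavity this extends to $v_{t+1}(e) \geq 0$ on $[0, E_c]$, so $V_{t+1}$ is non-decreasing there and the base-case shift argument goes through whenever the shifted trajectory $e^*_{t+1} + \delta$ stays in $[0, E_c]$. The hard part will be the case in which the $(E_c - \delta)$-optimal action at time $t+1$ charges, pushing $e^*_{t+1} + \delta$ above $E_c$, where $V_{t+1}$ need not be monotone. I plan to close this case by patching the shifted policy---draining the extra $\delta$ at the earliest opportunity by either discharging $\delta\eta/\tau$ more in a later non-negative-price period (when the discharge cap has slack) or charging $\delta/(\eta\tau)$ less at a period where the baseline policy charges---so that the two trajectories merge before SoC crosses $E_c$. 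Verifying that the aggregate per-realization objective change from these patches, together with the terminal comparison $V_T(e_T + \delta) \geq V_T(e_T)$, remains nonnegative in expectation is the principal technical hurdle; the condition $E \geq 2E_c$ is what guarantees the patched policy remains feasible at every intermediate step.
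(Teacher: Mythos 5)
Your base case and your reduction of the inductive step to a single hard case are sound: concavity does let you read $v_t(E_c)\geq 0$ as $V_t(E_c)\geq V_t(E_c-\delta)$ for small $\delta$, the shifted one-step policy is always feasible because the post-decision SoC reached from $E_c-\delta$ is at most $2E_c-\delta\leq E-\delta$, and for $t=T-1$ the monotonicity of $V_T$ closes the argument. The genuine gap is exactly the case you flag, and it does not close the way you hope. When the optimizer at $E_c-\delta$ charges past $E_c$ (which happens precisely at low or negative price realizations), both of your proposed patches have sign-indefinite payoffs: charging $\delta/(\eta\tau)$ less changes the stage reward by $+\hat\lambda_{t+1}\,\delta/(\eta\tau)$, which is negative in exactly the negative-price periods where full charging is optimal, while discharging $\delta\eta/\tau$ more changes it by $(\hat\lambda_{t+1}-c)\,\delta\eta/\tau$, which is negative whenever the price is below $c$. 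So ``drain the extra $\delta$ at the earliest opportunity'' strictly loses money on a positive-probability set of price paths, and you give no argument that the terminal gain $V_T(e_T+\delta)-V_T(e_T)$ compensates. The uncompensated case reduces to needing $V_{t+1}(2E_c)\geq V_{t+1}(2E_c-\delta)$, i.e.\ $v_{t+1}(2E_c)\geq 0$, which is strictly stronger than your inductive hypothesis $v_{t+1}(E_c)\geq 0$. Your assertion that $E\geq 2E_c$ keeps the unpatched shifted trajectory feasible at every intermediate step is also unjustified over multiple periods: the baseline trajectory started at $E_c-\delta$ can charge up to $E$ over several steps, after which adding $\delta$ violates the capacity constraint.

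For comparison, the paper does no trajectory surgery at all. It works directly with the one-step marginal value: using the closed-form expression for $q_t(\cdot\,|\hat\lambda_{t+1})$ in \eqref{eq:q0_nik}, it writes $v_t(E_c)$ as an expectation whose integrand on each price segment is one of $v_{t+1}(2E_c)$, $v_{t+1}(E_c)$, $v_{t+1}(0)$, and then uses the monotone decrease of $v_{t+1}$ to conclude $v_t(E_c)\geq v_{t+1}(2E_c)$, with $E\geq 2E_c$ entering only to keep $2E_c$ in the domain. If you want to salvage your coupling argument, you will likewise need to strengthen the induction so that it controls $v_{t+1}$ above $E_c$ (in particular at $2E_c$), not just at $E_c$.
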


Lemma~\ref{lem:ubd_pre} asserts that initiating storage charging for a single time period from zero SoC will not immediately result in the marginal value function $v_t(e_t)$ dropping below zero for all $t\in \mathcal{T}$.
The idea to prove Lemma~\ref{lem:ubd_pre} is trivial. {We can find a lower bound of $v_{t}(0)$ and then} we can prove the lower bound larger than zero. Detailed proof can be found in Appendix~\ref{appendix.lemgau}.

Following up on Lemma~\ref{lem:ubd_pre}, the proof to Theorem~\ref{the:ubd} is sketched as follows. First, we must find the exact expression of $\underline{\sigma}$ and prove the unbounded marginal value function $v_t(0)$. The approach to proving this theorem hinges on establishing the monotonic increase of $v_{t}(0)$ concerning the deviation $\sigma_{t+1}$. To accomplish this, we identify the lower bound of $v_{t}(0)$ based on the conclusion of Lemma~\ref{lem:ubd_pre}. {Second, we} demonstrate the monotonic increase of this lower bound with respect to the deviation $\sigma_{t+1}$, thereby providing us with the crucial value of $\underline{\sigma}$.
For detailed proofs of Theorem~\ref{the:ubd}, please refer to the Appendix~\ref{appendix.gau}.

{Theorem~\ref{the:ubd} addresses unbounded withholding under Gaussian distributions. However, in practice, price forecasts may follow a variety of distribution types. We now extend the analysis of storage withholding to consider these different distribution models. }

\begin{corollary}\label{cor:ubd_general}\textbf{Unbounded withholding for distributions {across various classes.}}
{Let the concavity of $V_T(e_T)$ hold.} Given arbitrary time interval $t\in\mathcal{T}^{'}$, for an arbitrary value $\theta \geq 0$ and an arbitrary price expectation $u_{t+1}$, there exists a {distribution} $\hat \lambda_{t+1}$ such that
\begin{align}
    v_{t}(0) \geq \theta \text{ and } \mathbb{E}[\hat \lambda_{t+1}] = \mu_{t+1},
\end{align}
which indicates storage economic capacity withholding, {represented by the price bid $o_t$,} is unbounded.
\end{corollary}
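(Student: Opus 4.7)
The plan is to reuse the machinery built for Theorem \ref{the:ubd} by observing that the Gaussian assumption there is only a convenient way to scale the upper tail while fixing the mean. First, I would revisit the proof of Theorem \ref{the:ubd} via Lemma \ref{lem:ubd_pre} to extract a shape-independent lower bound for $v_t(0)$. Because the discharge action is profitable only when $\hat\lambda_{t+1}$ exceeds a continuation-value threshold $k$ (determined by $c$, $\eta$, $\tau$, and the downstream marginal values, which are nonnegative by Lemma \ref{lem:ubd_pre}), the one-step Bellman recursion yields a bound of the form
\begin{equation}
v_t(0) \;\geq\; \eta\tau\,\mathbb{E}\bigl[(\hat\lambda_{t+1}-k)^+\bigr] - \text{const},
\end{equation}
valid for any price law with mean $\mu_{t+1}$, not just a Gaussian one.

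Second, for any prescribed $\theta\geq 0$ and $\mu_{t+1}$, I would exhibit explicit families of distributions, one per class, whose right-tail expectation past $k$ can be made arbitrarily large while the mean remains fixed at $\mu_{t+1}$. Natural candidates include (i) a two-point law placing mass $1-p$ at some $a<\mu_{t+1}$ and mass $p$ at $b_p=(\mu_{t+1}-(1-p)a)/p$ with $p\downarrow 0$; (ii) a Laplace distribution centered at $\mu_{t+1}$ with scale $s\to\infty$; and (iii) a symmetric uniform distribution on $[\mu_{t+1}-s,\mu_{t+1}+s]$ with $s\to\infty$. A direct integration in each case shows $\mathbb{E}[(\hat\lambda_{t+1}-k)^+]\to\infty$, after which the desired inequality $v_t(0)\geq\theta$ follows by choosing the spread parameter large enough. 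The construction also demonstrates the ``various classes'' claim, since the same mechanism is active for discrete, light-tailed, and heavy-tailed continuous laws alike.

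The main obstacle I anticipate is isolating the tail expectation as the only spread-dependent quantity on the right side of the lower bound. In Theorem \ref{the:ubd} the monotonicity of $v_t(0)$ in $\sigma_{t+1}$ was handled through explicit Gaussian moment formulas; in the general setting one must verify that the threshold $k$, which depends recursively on downstream marginal values $v_{t+1}, v_{t+2}, \ldots$, remains bounded as the spread parameter of the current-period law grows. Proposition \ref{pro:concave} provides concavity of $V_t$ for arbitrary distributions and so supplies the structural induction hypothesis, but one still needs to argue distribution-agnostically that the backward propagation of marginal values through the Bellman recursion does not allow $k$ to blow up together with the spread parameter. Once that uniform boundedness is established, the construction closes and the corollary is immediate from the displayed lower bound.
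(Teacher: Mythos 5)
Your route is the same mechanism as the paper's, packaged differently: the paper proves the corollary with an explicit two\nobreakdash-point price law, using the closed form of $q_t(0\mid\hat\lambda_{t+1})$ in \eqref{eq:q0_nik} to put a small mass at an arbitrarily high price and the remaining mass at a low price chosen so the mean stays at $\mu_{t+1}$ (harmless because $q_t(0\mid\cdot)$ is floored at $v_{t+1}(E_c)\geq 0$ by Lemma~\ref{lem:ubd_pre}), and then invokes a ``finite element'' mixture argument to reach general distribution classes. Your reduction to the tail functional $\mathbb{E}[(\hat\lambda_{t+1}-k)^+]$ exploits exactly the same structure --- $q_t(0\mid\cdot)$ is bounded below and asymptotically linear above in the price --- and your uniform and Laplace families genuinely drive that functional to infinity at fixed mean, which is what the ``various classes'' claim needs.

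Two concrete problems remain. First, your two-point family (i) with $a$ fixed and $p\downarrow 0$ fails: $p(b_p-k)=\mu_{t+1}-(1-p)a-pk\to\mu_{t+1}-a$, a finite limit, so the tail expectation stays bounded. You must also send the low atom to $-\infty$ (equivalently, let the high atom carry unbounded excess above the mean), which is what the paper's construction does implicitly; as written, example (i) should be repaired or dropped. Second, you flag the boundedness of the threshold $k$ as an unresolved obstacle, but the corollary is not proved until it is closed. It closes cheaply: every branch of \eqref{eq:q0_nik} satisfies the pointwise bound $q_t(0\mid\lambda)\geq \eta(\lambda-c)^+$ because $v_{t+1}\geq 0$ by Lemma~\ref{lem:ubd_pre}, so you may take $k=c$, a fixed physical constant independent of the spread parameter and of the downstream recursion, and no uniform-boundedness argument over $v_{t+1},v_{t+2},\dots$ is required. (Minor: the prefactor should be $\eta$, not $\eta\tau$, to match \eqref{eq:q0_nik}.) With those two repairs your argument is complete and arguably cleaner than the paper's, since it is distribution-agnostic from the outset rather than generalizing from price pairs.
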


Corollary~\ref{cor:ubd_general} extends the insights of Theorem~\ref{the:ubd} to encompass various distribution types, offering a more comprehensive understanding: In essence, the corollary suggests that storage dynamically updates its opportunity value by effectively aligning it with the latest peak price at its minimum SoC. This implies that the final {unit} of retained energy is strategically earmarked for sale during intervals with the most favorable projected future prices.

Subsequently, the main proof of this corollary articulates that once the storage revises its opportunity value linked to a SoC level to a more elevated price, the prior opportunity value is then transferred to the succeeding elevated SoC levels. This conveys that if storage perceives a unit of energy (SoC) can be traded at a price higher than previously estimated, it logically follows that the subsequent storage unit can be sold at the formerly estimated price. Detailed proofs of Corollary~\ref{cor:ubd_general}, please refer to the Appendix~\ref{appendix.soc}.

\begin{corollary}\label{cor:ubd_interval}\textbf{Unbounded withholding with extended bidding intervals.} 
{Let the concavity of $V_T(e_T)$ hold.} Given arbitrary time {period} $\kappa\in\mathcal{T}$ and starting SoC $e_{\kappa-1}\in[0,E]$, for an arbitrary bid value $\theta \geq 0$ and an arbitrary trajectory of future price expectations $u_t$, there exists a set of future price distributions $\hat \lambda_t$ such that for any $p_{\kappa}\in [0, \min\{P, e_{\kappa-1}\eta/\tau \}]$
\begin{align}
    o_{\kappa}(p_{\kappa}) \geq \theta \text{ and } \mathbb{E}[\hat \lambda_t] = \mu_t | t\in \{\kappa,\dotsc,T\},
\end{align}
if $T-t \geq (e_{\kappa-1}\eta-\tau p_{\kappa})/P$.
\end{corollary}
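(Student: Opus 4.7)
The plan is to reduce the target bid inequality to a lower bound on $v_\kappa$ at a specific SoC level, and then to construct the future price distributions so that this bound is achieved by iterating the single-period mechanism of Corollary~\ref{cor:ubd_general} and stacking the resulting high marginal values up the SoC coordinate one full-power discharging block at a time. Let $e^*:=e_{\kappa-1}-\tau p_\kappa/\eta$ and $\theta':=(\theta-c)\eta/\tau$. Unrolling~\eqref{eq:bid_pd}, the goal $o_\kappa(p_\kappa)\geq\theta$ is equivalent to $v_\kappa(e^*)\geq\theta'$, modulo the trivial case $\theta\leq c$ handled by the $[\,\cdot\,]^+$. The horizon hypothesis $T-\kappa\geq(e_{\kappa-1}\eta-\tau p_\kappa)/P$ says exactly that the number $N:=\lceil e^*\eta/(\tau P)\rceil$ of full-power discharges needed to empty $e^*$ fits in the remaining horizon. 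Because $v_\kappa$ is monotonically nonincreasing by Proposition~\ref{pro:concave}, it suffices to prove the bound at the worst case $p_\kappa=0$, and every smaller $e^*$ follows by monotonicity.

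The construction would split the periods $\kappa+1,\dotsc,T$ into two layers. For each of the last $N$ periods $t=T-N+1,\dotsc,T$, I would select a spike-type distribution with mean $\mu_t$ whose upper tail, chosen large enough via the unbounded freedom granted by Theorem~\ref{the:ubd}, delivers a marginal-value contribution of at least $\theta'$ to the block being stacked. For each intermediate period $t=\kappa+1,\dotsc,T-N$, I would let $\hat\lambda_t$ be a point mass at $\mu_t$, so by Proposition~\ref{pro:nwb} the optimal policy is idle over those periods and the SoC is carried forward unchanged. I would then establish by backward induction on $k=1,\dotsc,N$ the invariant $v_{T-k}(e)\geq\theta'$ for every $e\in[0,k\tau P/\eta]$, with the base case $k=1$ supplied by Corollary~\ref{cor:ubd_general} at time $T-1$ together with the internal SoC-stacking argument referred to in its proof. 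The inductive step uses the Bellman recursion~\eqref{p1_obj}--\eqref{eq:VQ}: at period $T-k+1$ the spike distribution forces the optimizer to discharge exactly one block of size $\tau P/\eta$ at the spike realization and to remain idle otherwise, so that $v_{T-k}$ at SoC $e+\tau P/\eta$ inherits the high value of $v_{T-k+1}$ at SoC $e$, extending the high-value range by one block. After $N$ inductive steps the range covers $[0,e^*]$, and propagation backward through the point-mass periods preserves the inequality, yielding $v_\kappa(e^*)\geq\theta'$.

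The main obstacle is making the stacking invariant rigorous in the presence of the efficiency factor $\eta$ and the upper capacity constraint~\eqref{eq:c4}. One must verify that in each spike period the optimizer truly discharges one full block rather than charging or partially discharging, that the point-mass periods trigger no unintended charge or discharge, and that the growing high-value range does not collide with $E$ in a way that breaks concavity-based monotonicity. Selecting the spike magnitudes sequentially, each strictly larger than the marginal values it stacks on top of, together with the concavity from Proposition~\ref{pro:concave} and the nonnegative-discharge indicator in~\eqref{eq:c1}, is the key device that locks in the block-by-block extension. The horizon inequality $T-\kappa\geq e^*\eta/P$ is then exactly the tight condition guaranteeing that enough blocks are available to reach the target SoC level $e^*$.
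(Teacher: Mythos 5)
Your overall strategy --- reducing $o_\kappa(p_\kappa)\geq\theta$ to $v_\kappa(e^*)\geq(\theta-c)\eta/\tau$ with $e^*=e_{\kappa-1}-\tau p_\kappa/\eta$, and then stacking high marginal values up the SoC axis one full-discharge block $\tau P/\eta$ at a time using two-point spike distributions with the prescribed means --- is the same mechanism the paper uses. The difference is where you place the spike periods, and that is where a genuine gap appears. The paper runs the recursion \emph{forward from $\kappa$}: the bound $v_\kappa(e)\geq\theta'$ is reduced to a lower bound on $v_{\kappa+1}(e-\tau P/\eta)$, then on $v_{\kappa+2}(e-2\tau P/\eta)$, and so on until the SoC argument reaches zero, at which point Corollary~\ref{cor:ubd_general} supplies an arbitrarily large value at zero SoC; the spikes therefore occupy the periods immediately after $\kappa$, and every later period is unconstrained. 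You instead put the spikes in the \emph{last} $N$ periods and fill $\kappa+1,\dotsc,T-N$ with point masses at $\mu_t$, asserting that the invariant survives backward propagation through those periods. It need not. A point mass at a low $\mu_{t+1}$ (say $\mu_{t+1}\leq v_{t+1}(e^*+E_c)\eta$) induces full charging, and then the marginal value at $e^*$ collapses to $v_{t+1}(e^*+E_c)$ --- a SoC level \emph{outside} your proven high-value range $[0,N\tau P/\eta]$, where Lemma~\ref{lem:ubd_pre} guarantees only nonnegativity. One such intermediate period can therefore destroy the bound, and since the trajectory $\mu_t$ is given adversarially you cannot exclude it. (Your appeal to Proposition~\ref{pro:nwb} to conclude the policy is ``idle'' under a point mass is also incorrect: that proposition yields a bang-bang policy, so the storage may well charge fully at a low deterministic price.)

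The repair is exactly the paper's ordering: place the spike distributions in periods $\kappa+1,\kappa+2,\dotsc$ so that each recursion step simultaneously discharges one block and passes the required lower bound to the next period at a strictly lower SoC, with no propagation through idle periods ever needed; the horizon condition $T-\kappa\geq(e_{\kappa-1}\eta-\tau p_\kappa)/P$ then guarantees the SoC argument reaches zero before the horizon ends, where Corollary~\ref{cor:ubd_general} closes the recursion. Your block-counting interpretation of the horizon condition and your reduction from the bid to the marginal value function are otherwise correct.
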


Corollary~\ref{cor:ubd_interval} shows that given a set of price predictions with fixed price expectations, storage bids can skyrocket to any value within a time interval, as long as the remaining time intervals in the bidding problem are sufficient for the storage to discharge completely. The flexibility for the storage bidder to extend the bidding horizon implies that the storage can strategically withhold any economic capacity based on assumed price distributions without altering the overall expectation. The proof of the Corollary~\ref{cor:ubd_interval} is based on the proofs of Corollary~\ref{cor:ubd_general}, which will be provided in Appendix~\ref{appendix.ubup}.

\subsection{Bounded Economic Capacity Withholding} \label{subsec.bounds}
{The analysis in Subsection~\ref{subsec.ubd} shows that storage unbounded withholding is a result of unbounded RTM prices.}
In practice, the system operator of electricity markets usually imposes price floors and ceilings, hence the distribution of the price is bounded. {Additionally, system operators and regulators want to identify whether a storage participant is exerting market power through economic capacity withholding.} In this section, we present results about the storage economic withholding bid bounds based on bounded prices with fixed or bounded expectations. {These results can serve as a foundation for monitoring and regulating storage economic withholding}.

\begin{theorem}\label{the:bound}\textbf{Bounded withholding with bounded prices.} {Let the concavity of $V_T(e_T)$ hold. } Assume the market prices have an upper bound $\overline{ \lambda}$ and a lower bound $\underline{ \lambda}$. Then, the storage bidding price has the following upper bound given $t \in \mathcal{T}^{'}$:
\begin{subequations}
\begin{align}
    o_t(p_t) &\leq c + \Big(\overline{ \lambda}-c\Big)\sum_{i=t+1}^{T} \alpha_i\prod_{\kappa=t+1}^{i-1} \beta_{\kappa}  \nonumber\\
    &\qquad +  {\prod_{\kappa=t+1}^T \beta_{\kappa}\frac{v_T(0)}{\eta}}\label{eq:the2}
\end{align}
for all possible price distributions satisfying
\begin{align}
    \mathbb{E}(\hat \lambda_{t+1}) = \mu_{t+1} \text{ and } \underline{ \lambda}\leq \hat \lambda_{t+1} \leq \overline{\lambda}
\end{align}
where $
    \alpha_t = ({\mu_t - \underline{\lambda}})/({\overline{\lambda}- \underline{ \lambda}}) $ and $\beta_t = 1-\alpha_t$. {We define  $\prod_{x_1}^{x_2}$ = 1 when $x_1 \geq x_2$}.
\end{subequations}
\end{theorem}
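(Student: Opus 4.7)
The plan is to reduce the theorem to a one-step backward recursion on $v_t(0)$ and then unroll it to the closed form in the statement. First, by Proposition~\ref{pro:concave} the value function $V_t$ is concave, so $v_t$ is non-increasing and $v_t(e_{t-1}-\tau p_t/\eta) \leq v_t(0)$ for every feasible $(e_{t-1},p_t)$. Combined with the bid formula in Definition~\ref{def:es_bid}, this gives $o_t(p_t) \leq c + (\tau/\eta)\, v_t(0)$, so it suffices to produce the explicit upper bound on the sequence $\{v_t(0)\}_{t\leq T}$ and then add $c$.

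Next, I would establish the key one-step inequality
$$v_t(0) \;\leq\; \mathbb{E}_{\hat\lambda_{t+1}}\!\Bigl[\max\bigl\{(\eta/\tau)(\hat\lambda_{t+1}-c),\; v_{t+1}(0)\bigr\}\Bigr].$$
For $\delta>0$, let $(p^\ast,b^\ast)$ optimize $Q_t(\delta\mid\hat\lambda_{t+1})$. The surrogate action $(0,b^\ast)$ is feasible at $e_t=0$, since lowering the initial SoC only relaxes the upper charging limit $\min\{P,(E-e)/(\eta\tau)\}$. Subtracting the two values and using the concavity of $V_{t+1}$ to bound $V_{t+1}(b^\ast\eta\tau+\delta-p^\ast\tau/\eta) - V_{t+1}(b^\ast\eta\tau) \leq (\delta-p^\ast\tau/\eta)\, v_{t+1}(0)$ gives
$$Q_t(\delta\mid\hat\lambda_{t+1}) - Q_t(0\mid\hat\lambda_{t+1}) \;\leq\; (\hat\lambda_{t+1}-c)\, p^\ast + (\delta-p^\ast\tau/\eta)\, v_{t+1}(0).$$
The right-hand side is affine in $p^\ast \in [0,\delta\eta/\tau]$, so its maximum is attained at an endpoint and equals $\delta\max\{(\eta/\tau)(\hat\lambda_{t+1}-c),\, v_{t+1}(0)\}$. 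Dividing by $\delta$, taking expectations over $\hat\lambda_{t+1}$, and letting $\delta\to 0^+$ yields the stated one-step inequality.

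Third, the function $\lambda \mapsto \max\{(\eta/\tau)(\lambda-c),\, v_{t+1}(0)\}$ is convex in $\lambda$, so for any $\hat\lambda_{t+1}$ supported on $[\underline\lambda,\overline\lambda]$ with mean $\mu_{t+1}$, the extremal two-point chord bound gives $\mathbb{E}[f(\hat\lambda_{t+1})] \leq \alpha_{t+1} f(\overline\lambda) + \beta_{t+1} f(\underline\lambda)$. Under the self-consistent regime $(\eta/\tau)(\underline\lambda-c) \leq v_{t+1}(0) \leq (\eta/\tau)(\overline\lambda-c)$, propagated by backward induction from the natural terminal condition $v_T(0) \leq (\eta/\tau)(\overline\lambda-c)$, the two endpoint maxima collapse to $(\eta/\tau)(\overline\lambda-c)$ and $v_{t+1}(0)$, yielding the clean linear recursion
$$\frac{\tau}{\eta}\, v_t(0) \;\leq\; \alpha_{t+1}(\overline\lambda-c) + \beta_{t+1}\,\frac{\tau}{\eta}\, v_{t+1}(0).$$
Setting $B_t := (\tau/\eta)\,v_t(0)$ and unrolling backward from $B_T$ via the standard telescoping identity produces the closed form $B_t \leq (\overline\lambda-c)\sum_{i=t+1}^T \alpha_i\prod_{\kappa=t+1}^{i-1}\beta_\kappa + \prod_{\kappa=t+1}^T\beta_\kappa\cdot B_T$, which together with $o_t(p_t)\leq c + B_t$ is exactly the theorem's bound.

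The principal technical obstacle is the second step: obtaining the pointwise chord bound on $Q_t(\delta)-Q_t(0)$ uniformly across the interior and boundary regimes of the Bellman optimizer. The route of comparing the optimizer at $\delta$ to the surrogate $(0,b^\ast)$ at $0$ and exploiting linearity in $p^\ast$ of the resulting bound sidesteps explicit first-order-condition case analysis and cleanly handles both interior and corner situations; the only subtle point is the feasibility of $(0,b^\ast)$ at $e_t=0$, which holds because reducing $e$ only enlarges the admissible charging set, leaving the SoC capacity constraint $e_t\leq E$ inactive in the comparison.
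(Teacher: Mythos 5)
Your proof is correct and reaches the stated bound, but by a genuinely different route than the paper. The paper's proof imports the closed-form expression for $q_t(0\mid\hat\lambda_{t+1})$ from prior work, restricts attention to two-point price distributions $\{\pi_{t+1},\gamma_{t+1}\}$, shows by a monotonicity argument that $v_t(0)$ is maximized when the two atoms sit at $\overline\lambda$ and $\underline\lambda$, and then appeals to a ``finite element'' decomposition of an arbitrary distribution into price pairs before unrolling the recursion. You instead derive the one-step inequality $v_t(0)\leq \mathbb{E}\bigl[\max\{(\eta/\tau)(\hat\lambda_{t+1}-c),\,v_{t+1}(0)\}\bigr]$ from first principles via a surrogate-action perturbation of the Bellman problem, and then replace the paper's two-point extremal argument with the Edmundson--Madansky chord bound for a convex integrand with bounded support and fixed mean; the recursion and its unrolling then coincide with the paper's. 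Your route is more self-contained (no reliance on the external closed form of $q_t$), and the chord bound is a cleaner, fully rigorous substitute for the finite-element step. Two remarks: (i) the feasibility of the surrogate $(0,b^\ast)$ and the sign condition $\delta-p^\ast\tau/\eta\geq 0$ both rely on excluding simultaneous charge and discharge at the optimum, which is without loss of generality here but should be stated; (ii) your ``self-consistent regime'' $(\eta/\tau)(\underline\lambda-c)\leq v_{t+1}(0)\leq(\eta/\tau)(\overline\lambda-c)$ is an extra hypothesis not present in the theorem statement, but the paper's own proof needs the equivalent condition (that $\overline\lambda$ and $\underline\lambda$ fall in the outer segments of the $q_t(0\mid\cdot)$ curve and that $c\geq\underline\lambda$), so you lose no generality relative to the published argument --- you are simply more explicit about where the bound could fail.
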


The idea of proving Theorem~\ref{the:bound} is that given the price bounds, then $v_t(0)$ is always maximized if the price is distributed at the maximum and minimum values, hence $\alpha_t$ is the occurrence of $\overline{\lambda}$ during interval $t$, while $\beta_t$ then is the occurrence of $\underline{\lambda}$ while ensuring the price expectation is $\mu_{t+1}$. Detailed proofs are given in Appendix~\ref{appendix.main2}.

{An example illustrating Theorem 2 assumes that storage expects to sell all its stored energy at the minimum market price $\underline{ \lambda}$ at the end of the market period while avoiding discharge at negative prices, as specified in \eqref{eq:c1}. Thus the storage has a linear end-value function $V_T(e_T) = \sigma_t \cdot e_T / \tau$ with $\sigma_t  \geq \Big[(\underline{ \lambda} - c )\eta\Big]^+$. Then we can find the economic withholding bound ${o}_t(p_t) \leq c + \Big(\overline{ \lambda}-c\Big)\sum_{i=t+1}^{T} \alpha_i\prod_{\kappa=t+1}^{i-1} \beta_{\kappa} 
    + \prod_{\kappa=t+1}^T \beta_{\kappa}\frac{\sigma_t}{\tau \eta}$}

The time-varying distribution of the price make Theorem~\ref{the:bound} unapparent to observe {and non-trivial for practical usage}, we now introduce the following corollary which assumes an upper bound of the price expectations.

\begin{corollary}\label{cor:exp}\textbf{Bounded withholding with bounded expectations.}
Following same assumptions in Theorem~\ref{the:bound}, now adding that all price expectations have an upper bound $\mu_t \leq \overline{\mu}$. All assume there is an interval-based discount ratio $\rho\in(0,1]$, then the upper storage bid bound becomes
\begin{subequations}
\begin{align}
    {o}_t(p_t) &\leq c + \frac{\alpha}{1-\rho\beta}(1-\beta^{T-1})(\overline{\lambda}-c) + \beta^T \frac{{v_T(e_T)}}{\eta} \label{cor:exp:eq} \\
    &\leq \frac{\alpha\overline{\lambda} }{1-\rho\beta}+(1-\frac{\alpha}{1-\rho\beta})c\quad(\text{as $T\to\infty$})  \label{cor:exp:eqTinf} 
    % \\
    % &\leq \overline{\lambda} \quad(\text{at $\rho=1$})
\end{align}
where scalars $
    \alpha = ({\overline{\mu} - \underline{ \lambda}})/({\overline{ \lambda}- \underline{\lambda}}) $, and $\beta = 1-\alpha$.  
\end{subequations}
\end{corollary}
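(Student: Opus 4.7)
The plan is to bootstrap directly from Theorem~\ref{the:bound} and collapse its time-varying sum-product bound into a closed geometric form once all expectations share the uniform ceiling $\overline{\mu}$. First I would substitute $\mu_t\leq\overline{\mu}$ into the definition $\alpha_t=(\mu_t-\underline{\lambda})/(\overline{\lambda}-\underline{\lambda})$ to obtain $\alpha_t\leq\alpha$, and hence $\beta_t\geq\beta$, uniformly in $t$. A direct componentwise substitution is not sufficient because a larger $\beta_t$ inflates the inner products $\prod_{\kappa=t+1}^{i-1}\beta_\kappa$ appearing in the Theorem~\ref{the:bound} bound, so the bound must be handled through the recursion it encodes rather than term by term.

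I would then recast the sum-product of Theorem~\ref{the:bound} as the value $B_t$ of the backward recurrence $B_t=\alpha_{t+1}+\beta_{t+1}B_{t+1}$ with terminal condition $B_T=v_T(0)/[\eta(\overline{\lambda}-c)]$, which follows immediately by unrolling one step of the definition. Rewriting as $B_t=\alpha_{t+1}(1-B_{t+1})+B_{t+1}$, the map is monotone increasing in $\alpha_{t+1}$ whenever $B_{t+1}\leq 1$, so a short backward induction shows the worst case over the feasible region $\alpha_{t+1}\in[0,\alpha]$ is attained by saturating at $\alpha_{t+1}=\alpha$ at every period, provided the induction hypothesis $B_{t+1}\leq 1$ propagates. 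That hypothesis is controlled by the terminal value $v_T(e_T)$ and is the implicit boundedness condition the statement carries through from Theorem~\ref{the:bound}.

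With the worst case fixed, the recursion collapses to the scalar contraction $B_t\leq\alpha+\rho\beta B_{t+1}$, where the interval-based discount ratio $\rho\in(0,1]$ captures the additional multiplicative contraction induced when a unit of opportunity value is propagated across a full charge-discharge cycle through the efficiency factor $\eta$ in the bid equations~\eqref{eq:bid_pd}--\eqref{eq:bid_pc}. Unrolling this geometric recursion from $t=1$ to $T-1$ yields the closed form $B_1\leq\alpha\,\dfrac{1-(\rho\beta)^{T-1}}{1-\rho\beta}+(\rho\beta)^{T-1}B_T$; multiplying through by $\overline{\lambda}-c$, adding the physical cost $c$, and absorbing $(\rho\beta)^{T-1}\leq\beta^{T-1}$ together with the terminal term into the $\beta^T v_T(e_T)/\eta$ factor reproduces~\eqref{cor:exp:eq}. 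The asymptotic form~\eqref{cor:exp:eqTinf} then follows by letting $T\to\infty$: since $\beta=1-\alpha\in[0,1)$, the factor $\beta^T$ kills the terminal-value term, $(1-\beta^{T-1})\to 1$, and distributing $\alpha/(1-\rho\beta)$ over $\overline{\lambda}-c$ while collecting the coefficients of $\overline{\lambda}$ and $c$ gives the stated affine bound.

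The main obstacle will be rigorously identifying the interval-based discount ratio $\rho$ as a genuine per-period contraction on the value recursion rather than a nominal parameter. This requires auditing how the marginal value $v_t$ propagates backward through the $\tau/\eta$ and $\tau\eta$ scalings in~\eqref{eq:esbids_def} and verifying that each backward step incurs a multiplicative loss no larger than $\rho$. A secondary technical point is preserving the induction hypothesis $B_{t+1}\leq 1$ throughout the backward sweep, which ties back to the concavity and bounded magnitude of $V_T$ provided by Proposition~\ref{pro:concave}; if this hypothesis fails at some period, the worst-case $\alpha_{t+1}=\alpha$ saturation must be replaced by a case analysis on the sign of $1-B_{t+1}$.
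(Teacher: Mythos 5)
Your overall route is the same as the paper's: set every $\alpha_t$ to the uniform ceiling $\alpha$ in the bound of Theorem~\ref{the:bound} and collapse the sum--product into a geometric series. The paper dispatches this in one sentence as ``trivial.'' What you add, and what the paper omits, is the justification that the uniform substitution actually yields an \emph{upper} bound: since $\alpha_t\leq\alpha$ forces $\beta_t\geq\beta$, the inner products $\prod_\kappa\beta_\kappa$ in \eqref{eq:the2} move the wrong way under componentwise substitution, and your backward recursion $B_t=\alpha_{t+1}+\beta_{t+1}B_{t+1}=B_{t+1}+\alpha_{t+1}(1-B_{t+1})$ with the propagating hypothesis $B_{t+1}\leq 1$ (equivalently $v_T(0)\leq\eta(\overline{\lambda}-c)$ at the terminal step) is exactly the missing argument; note $B_t\leq\alpha+(1-\alpha)\cdot 1=1$ so the hypothesis does propagate once it holds at $T$. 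You are also right to flag $\rho$ as the weak point: the paper introduces the ``interval-based discount ratio'' by assumption and never derives it from the $\tau/\eta$, $\tau\eta$ scalings in \eqref{eq:esbids_def}, so your proposed audit is an open task in the paper itself, not a defect of your argument.

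Two small cautions. First, your final absorption step runs in the wrong direction: unrolling $B_t\leq\alpha+\rho\beta B_{t+1}$ gives $\alpha\bigl(1-(\rho\beta)^{T-1}\bigr)/(1-\rho\beta)$, and since $(\rho\beta)^{T-1}\leq\beta^{T-1}$ this quantity is \emph{at least} $\alpha\bigl(1-\beta^{T-1}\bigr)/(1-\rho\beta)$, so you cannot simply ``absorb'' to recover \eqref{cor:exp:eq} as stated; the mismatch (together with the $\beta^{T}$ versus $\beta^{T-1}$ exponent on the terminal term) traces to looseness in the corollary's own statement rather than to your derivation. Second, for $\rho=1$ your formula reduces cleanly via $\alpha/(1-\beta)=1$ to the telescoping identity $c+(1-\beta^{T-1})(\overline{\lambda}-c)+\beta^{T-1}v_T(0)/\eta$, which is the honest content of the corollary; any claim with $\rho<1$ is strictly stronger and genuinely requires the per-period contraction you describe as the ``main obstacle.''
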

The derivation of this corollary is trivial by assuming all $\alpha_{t}$ are the same in Theorem~\ref{the:bound}, then the first term in \eqref{eq:the2}, which is a cumulative sum of power ratings, can be reduced to the first term of \eqref{cor:exp:eq}.
Here $\alpha_{t}$ is the upper bound of the occurrence of $\overline{\lambda}$. 

Corollary~\ref{cor:exp} provides a time-invariant upper bound over all bids given bounded price expectations. Note that this upper bound increases with $\overline{\mu}$ and $T$, indicating that given a sufficient long time horizon, the storage bid upper bound will converge to the price upper bound as shown in~\eqref{cor:exp:eqTinf}.

\subsection{Welfare-Aligned Economic Withholding} \label{subsec.welfare}
{Our discussions about unbounded and bounded withholding reveal that energy storage withholding serves as a strategy to manage uncertainty in RTM electricity prices, rather than an act of market manipulation. In this section, we fuehrer examine the role of battery energy storage in electricity markets: Whether the storage acts as an honest participant whose withholding behavior aligns with social welfare. We propose the following proposition:}

\begin{proposition} \label{pro:Lag}
   Assuming the {aggragated} storage capacity is negligibly small,  the variance of the market clearing price $\lambda_t$ scales linearly with the standard variance of the net demand $D_t - w_t$ if the cost function of the {RTM market clearing problem}~\eqref{eq:obj} is quadratic.
\end{proposition}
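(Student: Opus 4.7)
The plan is to reduce the market clearing problem~\eqref{eq:market} to a single-variable optimization by invoking the negligibility of storage capacity, solve it explicitly using the quadratic structure of $G_t$, and then read off the market clearing price as an affine function of net demand.

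First I would argue that when the aggregated storage capacity is negligibly small, the terms $\sum_{\mathcal{S}} p_t$, $\sum_{\mathcal{S}} b_t$, $\sum_{\mathcal{S}} O_t(p_t)$ and $\sum_{\mathcal{S}} D_t(b_t)$ in \eqref{eq:obj} and \eqref{eq:market_lam} drop out of the clearing problem in the limit. The problem then reduces to $\min_{g_t} G_t(g_t)$ subject only to the power balance $g_t = L_t - w_t$, whose unique feasible point (and hence minimizer) is $g_t^\ast = L_t - w_t$. Writing the quadratic cost as $G_t(g_t) = \tfrac{1}{2} a\, g_t^2 + b\, g_t + \gamma$ with $a>0$, the KKT stationarity condition associated with~\eqref{eq:market_lam} gives the clearing price
\begin{equation*}
\lambda_t \;=\; G_t'(g_t^\ast) \;=\; a\,(L_t - w_t) + b.
\end{equation*}

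Next I would apply the variance operator directly to this affine relationship to obtain $\mathrm{Var}(\lambda_t) = a^2\, \mathrm{Var}(L_t - w_t)$, showing that $\mathrm{Var}(\lambda_t)$ scales linearly with $\mathrm{Var}(L_t - w_t)$ with proportionality constant $a^2$ fixed by the curvature of $G_t$. (Reading the statement literally, ``standard variance'' appears to refer to either the variance or the standard deviation; the affine relation delivers a linear proportionality in either reading, since the standard deviations are also related by the constant factor $|a|$.)

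The main obstacle will be making the ``negligibly small storage capacity'' assumption quantitatively precise so that dropping the storage terms from~\eqref{eq:obj} and \eqref{eq:market_lam} is rigorously justified rather than heuristic. Specifically, one needs to ensure that generator box constraints (non-negativity and upper capacity limits on $g_t$) remain inactive in the relevant operating regime, so that the dual multiplier on the power balance coincides with $G_t'(g_t^\ast)$ and does not absorb additional contributions from those bounds. A minor secondary issue is the degenerate case $a = 0$, in which $\lambda_t$ collapses to the constant $b$ and the scaling becomes trivially zero on both sides.
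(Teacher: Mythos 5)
Your proposal is correct and follows essentially the same route as the paper's proof in Appendix~\ref{appendix.Lag}: drop the storage terms by the negligible-capacity assumption, use the KKT stationarity condition to identify $\lambda_t$ with $G_t'(g_t)$ evaluated at the power-balance point $g_t = L_t - w_t$, and conclude that the quadratic cost makes $\lambda_t$ affine in net demand. You are in fact slightly more explicit than the paper, which stops at showing $\partial \lambda_t / \partial x_t$ is a nonnegative constant without carrying the affine relation through the variance operator, and your caveats about inactive generator box constraints and the degenerate case $a=0$ are reasonable points the paper leaves implicit.
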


Proposition~\ref{pro:Lag} states that market clearing price is monotonically associated with net demand, so a higher standard deviation of net demand always results in a higher deviation of clearing prices, {given} the same distribution type. The proof of this proposition is trivial by using the Lagrange function of the RTM {clearing} problem as shown in the Appendix~\ref{appendix.Lag}. {Next, we investigate the relation between storage economic capacity withholding and the uncertainty of net demand in the following corollary}

\begin{corollary} \label{cor:Lag}
    Following same assumptions in Proposition~\ref{pro:Lag}, if the uncertainty models of storage price forecast and systemic net demand follow the same distribution type, the storage utilizing uncertainty models for market bid formulation can potentially lower the overall system cost when the system encounters a suitable level of uncertainty.
\end{corollary}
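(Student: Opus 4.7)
The plan is to combine Proposition~\ref{pro:Lag} with the concavity of the value function (Proposition~\ref{pro:concave}) and the monotone dependence of storage bids on forecasted deviation (Theorem~\ref{the:ubd} and Corollary~\ref{cor:ubd_general}) to construct a coupling between the storage's uncertainty model and the true distribution of system net demand. Under the small-capacity assumption in Proposition~\ref{pro:Lag}, storage dispatch does not materially move prices, so the realized RTM price distribution inherits the distribution type of the net demand $L_t-w_t$. The same-distribution-type hypothesis in the corollary then guarantees that the storage's forecast $\hat\lambda_t$ is a well-calibrated proxy for the endogenous price variance, up to the scaling constant from Proposition~\ref{pro:Lag}.

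Next, I would write the expected total system operating cost as $\mathbb{E}[\sum_t G_t(g_t^*)]$, where $g_t^*$ is the cleared conventional generation, which by the power-balance constraint~\eqref{eq:market_lam} equals $L_t - w_t - p_t^* + b_t^*$. Substituting a quadratic $G_t$ gives a per-period expectation that is strictly convex in the realized net demand $L_t - w_t$. The key observation, drawing on Theorem~\ref{the:ubd} and Corollary~\ref{cor:ubd_general}, is that increasing the forecast deviation translates into an upward shift of the discharge bid $o_t(p_t)$ from the constant non-withholding baseline of Proposition~\ref{pro:nwb}, so the storage defers a larger fraction of discharge to higher-priced periods. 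By the comonotonicity of $\lambda_t$ with $L_t-w_t$ implied by Proposition~\ref{pro:Lag}, these are precisely the periods in which $G_t'$ is largest.

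With this in place, the argument reduces to a Jensen-type comparison: matching the withholding-aware dispatch against the deterministic non-withholding baseline, one decomposes the expected cost difference into a ``mean-arbitrage'' term (unchanged, since both bids preserve the same price expectation) and a ``tail-shaving'' term that is nonpositive by convexity of $G_t$ and strictly negative whenever the induced dispatch reallocation is nontrivial. The threshold at which storage begins to shift mass out of the mean into tail discharge is exactly the ``suitable level of uncertainty'': below it the bid degenerates to the linear baseline of Proposition~\ref{pro:nwb} and no welfare gain arises, above it the storage actively reallocates energy to higher-cost realizations.

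The main obstacle is formalizing the reallocation step, namely linking the distributional shift in storage dispatch induced by the stochastic program~\eqref{eq:esbid_opti} to the conditional distribution of the system's true net demand at the market-clearing level. The small-capacity regime of Proposition~\ref{pro:Lag} linearizes this coupling, but demonstrating that deferred discharge actually lands in high-net-demand realizations, rather than merely shifting across time, will require a monotone-rearrangement argument exploiting the concavity of $V_t$ and the affine map between $\lambda_t$ and $L_t-w_t$. Care is also needed to state the result as ``potentially lower'' rather than ``almost surely lower'', since the welfare improvement is in expectation and depends on the realized variance exceeding the threshold.
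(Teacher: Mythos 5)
Your route is genuinely different from the paper's, and it leaves its central step open. The paper's proof is a short welfare-alignment argument: because the aggregated storage capacity is negligible, storage dispatch does not move the clearing price, and Proposition~\ref{pro:Lag} gives an affine map from net demand to $\lambda_t$; if the storage's uncertainty model has the same distribution type as the net demand, one can take $\hat\lambda_t=\lambda_t$ in distribution, i.e., the storage holds a distributionally correct forecast, and then a price-taking profit maximizer's charge/discharge decisions coincide with the decisions that minimize system operating cost. The ``suitable level of uncertainty'' in the paper is the point where the storage's assumed deviation matches the price deviation induced by the system (this is what the diagonal minimum in Figure~\ref{fig:cost_ideal} illustrates), not a variance threshold above which withholding switches on, which is how you read it.

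Your alternative -- writing the expected cost as $\mathbb{E}\big[\sum_t G_t(L_t-w_t-p_t^*+b_t^*)\big]$ and arguing a Jensen-type ``tail-shaving'' gain from deferring discharge to high-$G_t'$ periods -- could in principle yield a sharper, quantitative version of the corollary, but as written it does not close. The claim that the tail-shaving term is nonpositive rests on showing that the discharge withheld under the uncertainty-aware bid is in fact reallocated to realizations/periods with higher net demand, and you explicitly flag this monotone-rearrangement step as ``the main obstacle'' without supplying it. Until that step is proved (e.g., via the comonotonicity of $o_t$ with $v_t$, the concavity of $V_t$ from Proposition~\ref{pro:concave}, and the affine price--net-demand map), the decomposition only shows that a cost reduction is possible if the reallocation behaves as hoped, which is circular with respect to what the corollary asserts. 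If you want to keep your approach, the missing lemma is roughly: under the clearing model~\eqref{eq:market} with convex $O_t$, $-D_t$ and the bid curves of Definition~\ref{def:es_bid}, the cleared discharge $p_t^*$ is nondecreasing in the realized $\lambda_t$, hence (by Proposition~\ref{pro:Lag}) in the realized net demand; combined with energy conservation across the horizon this pins down where the withheld energy lands. The paper sidesteps all of this by invoking the equivalence of individual profit maximization and social cost minimization under a correct price process.
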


Corollary~\ref{cor:Lag} is trivial. Given by Proposition~\ref{pro:Lag} that RTM clearing price $\lambda_t$ scales linearly with the net demand, when the storage capacity is small enough that will not impact market clearing outcomes, we can treat $\hat \lambda_t = \lambda_t$, indicating storage has a ``perfect'' forecast of RTM clearing price at period $t$. Therefore, {the solutions of} storage charging/discharging for maximizing own arbitrage profits matches {those} for minimizing system operational costs. 

Corollary~\ref{cor:Lag} reveals that the economic withholding behavior of storage can paradoxically enhance social welfare when the system encounters a suitable level of uncertainty. {This result, together with the analysis of storage convex bids and unbounded/bounded withholding, reveals that storage that strategically designs economic withholding bids may not be a market manipulator but can contribute to social welfare convergence}.

\section{Simulation Results}
\label{sec.simulation}
We demonstrate the effectiveness of the proposed theorems with the following case studies based on the ISO New England test system~\citep{krishnamurthy20158} with an average load of 13~GW.  The transmission constraints are ignored in this study as the New England system is usually not congested~\citep{krishnamurthy20158}.  

The case simulations are conducted using Julia, in which the optimization problems are solved by solver Gurobi. The figures are plotted by Matlab.

% case 1: bidding function 
\subsection{{Storage Convex Bids}}

This study {shows the concavity of storage value function $V_t(e_t)$ in Proposition~\ref{pro:concave} in Subsection~\ref{subsec.convexbid}, which leads to storage convex bids in Corollary~\ref{cor:convexbids}}.
We analyzes {in practical RTM market how the storage price forecast $\hat \lambda_t$ in Figure~\ref{fig:esv_lmp} shapes the storage marginal valuation $v_t(e_t)$ defined in equation~\eqref{eq:esbid_opti}.}

As shown in Figure~\ref{fig:esv_f}, a key observation is the close connection between the storage price forecast and the resulting marginal values. The peaks and valleys in the price forecast (Figure~\ref{fig:esv_lmp}) directly correspond to the peaks and valleys in the marginal value function (Figure~\ref{fig:esv_f}). This confirms the significant influence of the storage price forecast on storage valuation.

Furthermore, the storage marginal value monotonically decreases as the storage SoC increases. For example, at specific times like 3pm, storage marginal value drops from 45.5 to 11.8 \$/MWh. This aligns with Proposition~\ref{pro:concave}, which suggests that {a storage participant will bid lower prices as their SoC increases, rather than maintaining high bids.} Therefore, storage market power is limited due to individual participants expressing their willingness to charge or discharge through their bidding prices. This finding implies that the system operator can be less concerned about explicitly modeling battery SoC constraints when considering storage bids, as the market itself incentivizes efficient charging and discharging behavior.

\begin{figure}[H]%
	\centering
	
	\subfloat[Price forecast]{
		\includegraphics[trim = 0mm 0mm 0mm 70.5mm, clip, width = .6\columnwidth]{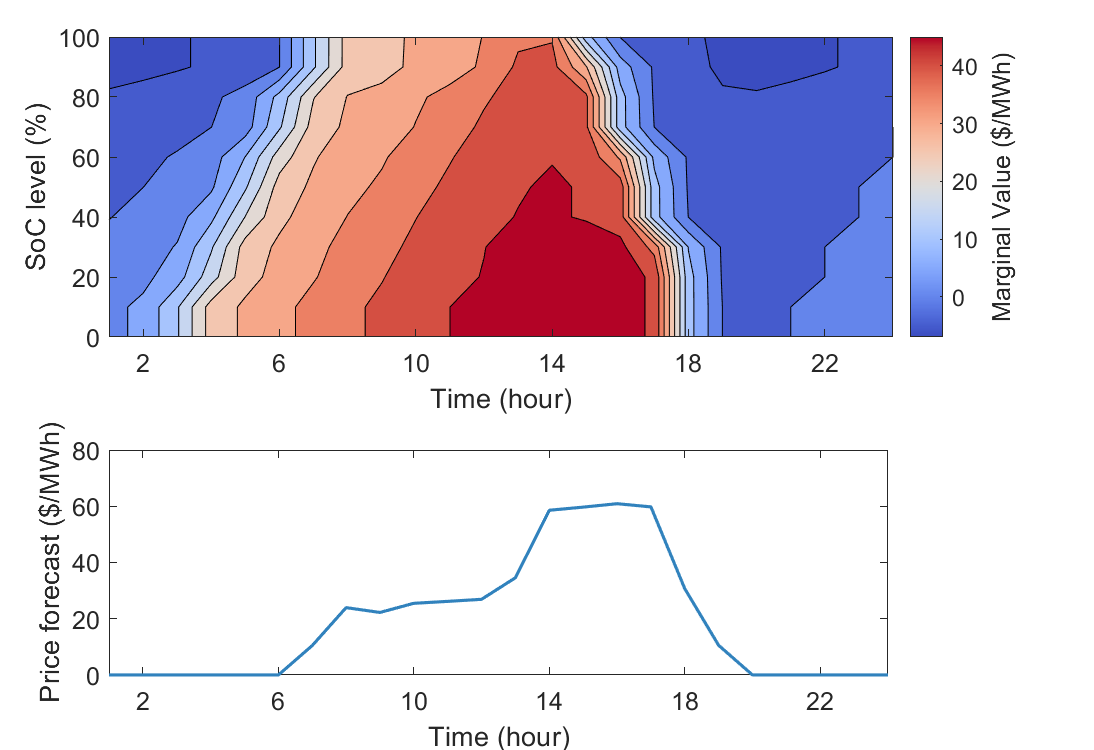}\label{fig:esv_lmp}%
	} \\
	\subfloat[Marginal value]{
		\includegraphics[trim = 0mm 56mm 0mm 0mm, clip, width = .6\columnwidth]{figures/pdbid_2d.png}\label{fig:esv_f}
	}
  \caption{Storage marginal value under price forecast. This case adopts a deterministic price forecast to demonstrate the monotonicity of storage marginal value.  }
    \label{fig:esv}
\end{figure}

\subsection{Storage Unbounded Withholding}~\label{subsec.simulation_ubd}
% case 2: unbounded bidding  

{This case study validates Theorem~\ref{the:ubd}, demonstrating that storage withholding becomes unbounded when the price forecast follows a Gaussian distribution with a fixed mean and unbounded variance. This price forecast scenario reflects real-world events such as the Texas power crisis in 2019~\citep{wikipedia2021texaspowercrisis} and the Australian RTM price spikes in 2016~\citep{aemo2016highenergypricesa}, where the DAM prices remained reasonable, but RTM prices surged to approximately USD 9,000\$/MWh and AUD 2361.17\$/MWh, respectively - values that can be treated unbounded for theoretical analysis.} 

\begin{figure}[h]%
    \centering
    \includegraphics[trim = 0mm 0mm 0mm 0mm, clip, width = .6\columnwidth]{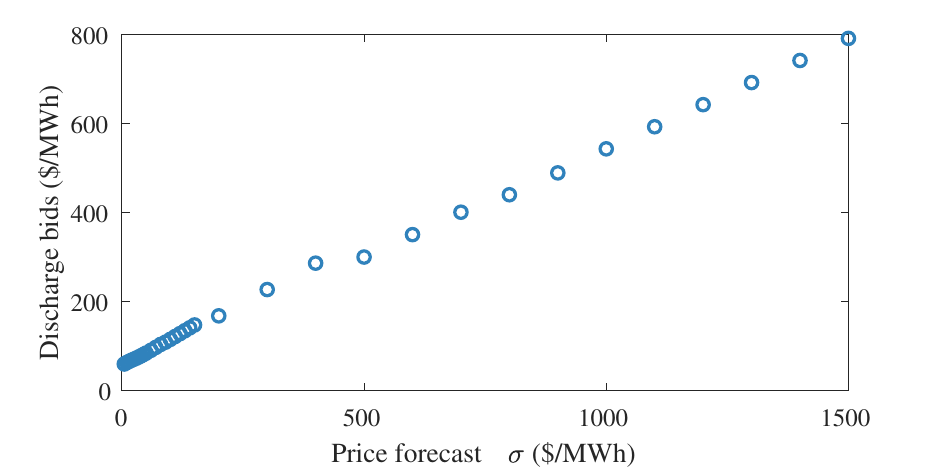}
    \caption{Storage unbounded discharge bids with fixed expectation {of 26.20~\$/MWh} and given deviations from 5 to 1500~\$/MWh. {The marginal discharge cost $c$ is 25~\$/MWh.} }
    \label{fig:esbid_ub}
    \vspace{-1em}
\end{figure}

As shown in Figure~\ref{fig:esbid_ub}, given storage price forecast follows a {Gaussian} distribution with expectation as DAM price and deviation $\sigma$ from 0 to 1500~\$/MWh, {which is a very high price to reflect the two real-world events}. We observe that storage bids are monotonically increasing from {approximately} 10 to 800~\$/MWh {as price deviations grow, even when the forecast price mean is fixed}.  This finding, consistent with Theorem~\ref{the:ubd}, highlights that price deviation plays a crucial role in shaping storage participation in the market. 
{Storage units aim to protect themselves from price fluctuations and the negative effects of uncertainty by withholding, which could inadvertently lead to inefficient market outcomes. 
On the other hand, this monotonic increase in bids demonstrates that the unbounded withholding of storage arises from future price uncertainty rather than an intentional effort to manipulate market power.} To address this issue and better align storage participation with social welfare, we recommend system operators provide more transparent information about price forecasts or set reasonable bounds on their potential fluctuations. This would help storage units make informed decisions and encourage more efficient capacity allocation within the market.

% case 3: bounded bidding  
\subsection{Storage Bounded Withholding}~\label{subsec.simulation_bd}
{We validate Theorem~\ref{the:bound}, which states that storage withholding is bounded when system operators impose floors and ceilings on RTM prices. With RTM prices bounded, the storage price forecast is also constrained,} with a lower bound $\underline{ \lambda} =$ 5~\$/MWh and a high bound $\overline{\lambda} =$ 150~\$/MWh. To demonstrate that Theorem~\ref{the:bound} does not require specific price distribution, here we use a uniform distribution whose expectations are the same as Figure~\ref{fig:esv_lmp}, {but the deviations increase from 3 to 120~\$/MWh. When the considered deviation makes forecast price distribution exceed the bounds, we normalize the distribution within the defined bounds.} 

Our results in Figure~\ref{fig:esbid_bdsig} reveal that storage bounds and bids exhibit a dynamic relationship with deviations of the given distribution, in which the bids remain confined within the bounds outlined by equation~\eqref{eq:the2}. We also find that due to the existence of upper and lower bounds, when the deviation is larger than 200~\$/MWh, storage bids will not increase as deviation increases, which {shows the storage economic capacity withholding is also bounded}. Therefore, we advocate for the system operator to establish price bounds that effectively regulate storage bids and bids while simultaneously ensuring societal welfare. {At the same time, the derived price bounds in Subsection~\ref{subsec.bounds} can be used as a baseline for monitoring storage withholding by the system operators.}

\begin{figure}[H]%
	\centering
	\includegraphics[trim = 0mm 0mm 0mm 0mm, clip, width = .5\columnwidth]{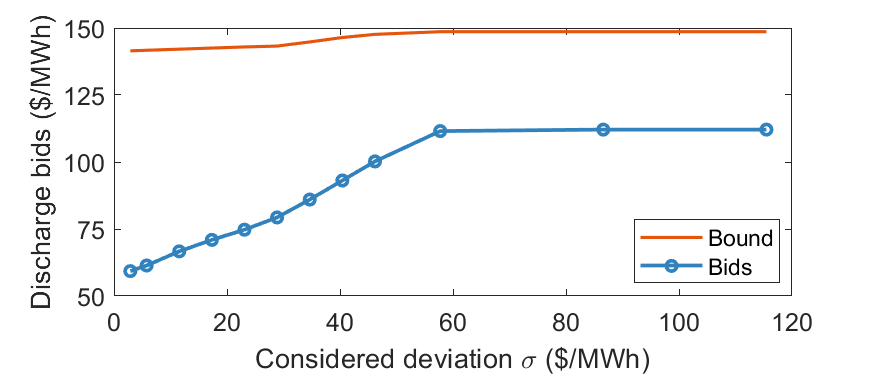}
        \caption{Storage bounded discharge bids with fixed expectations and deviations from 3 to 120~\$/MWh.}
    \label{fig:esbid_bdsig}
    \vspace{-1em}
\end{figure}

\subsection{Welfare-Aligned Economic Withholding } \label{subsec.simulation_welfare}
{In this case, we validate Corollary~\ref{cor:Lag}, which states that storage considering uncertainty in price forecast and withholding improve social welfare. We first examine the ``ideal case" in Subsection~\ref{subsec.results_ideal}, which adheres strictly to the assumption in Corollary~\ref{cor:Lag} that storage capacity is sufficiently small to be negligible.  Next, we present the ``practical case" in Subsection~\ref{subsec.results_practical}, which extends beyond the initial assumption to assess the broader applicability and effectiveness of our findings. This scenario considers a future where the aggregated capacity of energy storage becomes significant and cannot be ignored, a condition anticipated to arise within the next 5–10 years.}

% case 4: storage-market interaction (ideal case)
\subsubsection{Ideal case}\label{subsec.results_ideal}
{Here we first validate Proposition~\ref{pro:Lag}, and then validate Corollary~\ref{cor:Lag}. For clarity,} we use a simplified setting with specific characteristics: 
\begin{itemize}
    \item {We modify the ISO-NE test system from Krishnamurthy~\citep{krishnamurthy20158}, by introducing an aggregated} generator with infinite capacity and a quadratic cost function $G_t(g_t) = 10g_t + 0.04(g_t)^2$
    \item {We consider both storage forecast uncertainty and system demand uncertainty follow Gaussian distributions, in which storage price forecasts centered around the DAM price~\citep{tang2016model} with a deviation from 1 to 30 \$/MWh, and the system net demand centered around the DAM demand with a deviation from 0.5 to 3~GW.}
    \item We incorporate a storage capacity of 10 MW/40 MWh, a sufficiently small capacity that will not affect market clearing.
\end{itemize}

{Our first-step result to validate Proposition~\ref{pro:Lag} is shown in Figure~\ref{fig:demand}, which demonstrates} the RTM clearing price monotonically increasing with the standard deviation of the net demand, as predicted by Proposition~\ref{pro:Lag}. Due to the limited storage capacity, the storage price deviation is monotonically increasing with the market clearing price, which  exhibits a linear relationship with the net demand.
\begin{figure}[H]
    \centering
    \includegraphics[trim = 0mm 0mm 0mm 0mm, clip, width = 0.65\columnwidth]{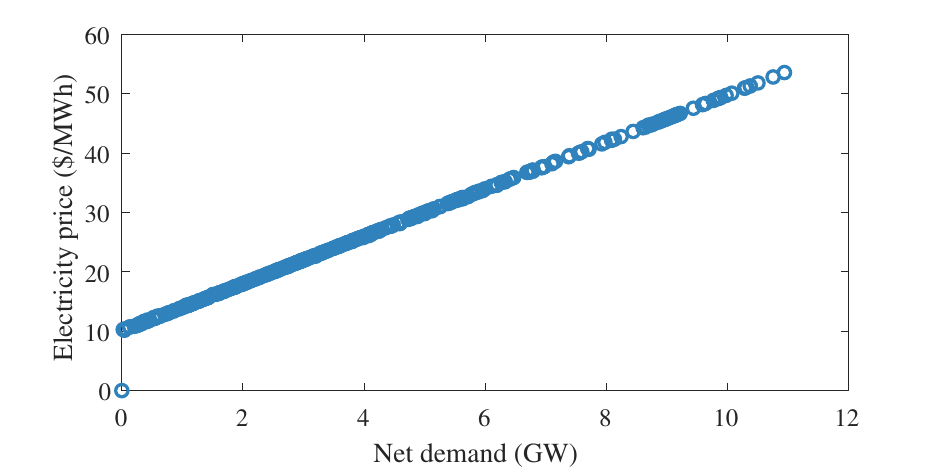}
    \caption{Relationship between market clearing price $\lambda_t$ and net demand.}
    \label{fig:demand}
\end{figure}

{In the second step, we validate Corollary~\ref{cor:Lag} by analyzing the relationship between storage forecast uncertainty and system net load uncertainty. Figure~\ref{fig:cost_ideal}, with net demand deviation on $x$-axis, storage forecast deviation on $y$-axis, and system cost on $z$-axis,} reveals that the minimum cost points approximately lie along the diagonal of the map. This means the storage participant's consideration of price uncertainty can counterpoise the system uncertainties, if the two uncertainties follow the same types of distributions. Note that, Corollary~\ref{cor:Lag} suggests a linear relationship between storage uncertainty and system uncertainty, resulting in strict diagonally aligned points for minimum cost. Figure~\ref{fig:cost_ideal} exhibits slight differences from this ideal scenario. These differences can be attributed to 1) the discrete intervals used for both the $x$-axis (2~\$/MWh) and $y$-axis (0.5~GW), 2) the limited number of Monte-Carlo realizations, and 3) the time-dependent dispatch characteristic of RTM. Notably, Corollary~\ref{cor:Lag} focuses on a single period $t$, while simulation in this case involves storage's SoC and charging/discharging decisions across multiple periods, influenced by the entire price prediction curve.
\begin{figure}[H]%
	\centering
	\subfloat[System cost]{
		\includegraphics[trim = 0mm 0mm 0mm 0mm, clip, width = .5\columnwidth]{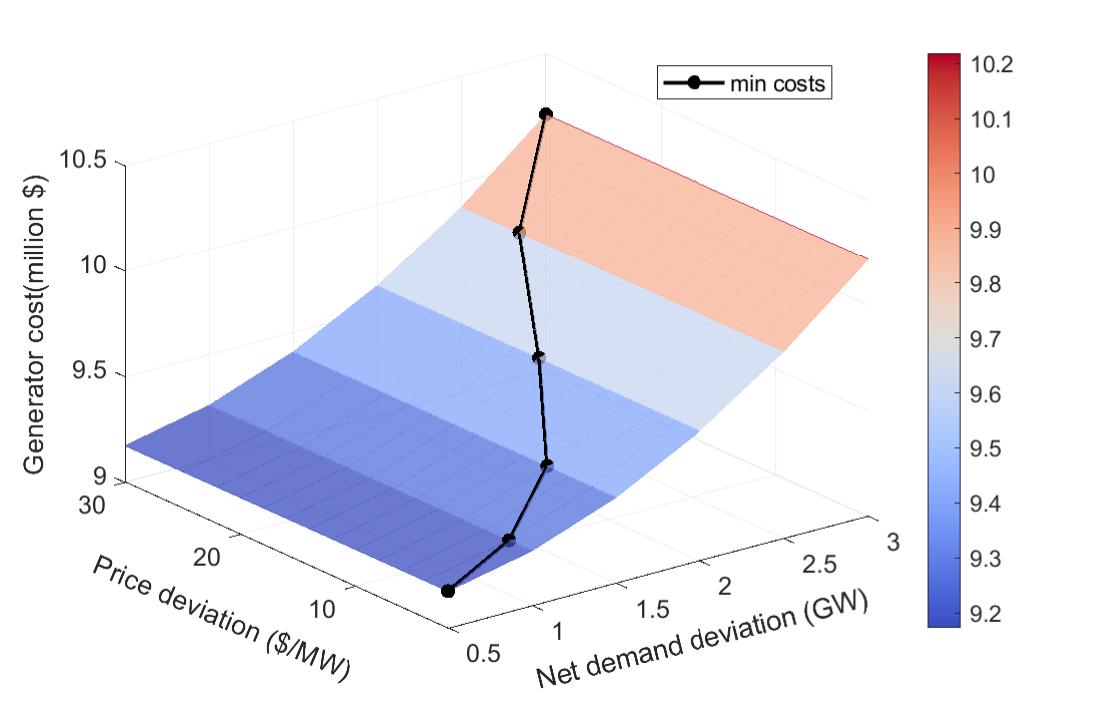}\label{fig:cost_ideal}%
	}
	\subfloat[Storage profit]{
		\includegraphics[trim = 0mm 0mm 0mm 0mm, clip, width = .5\columnwidth]{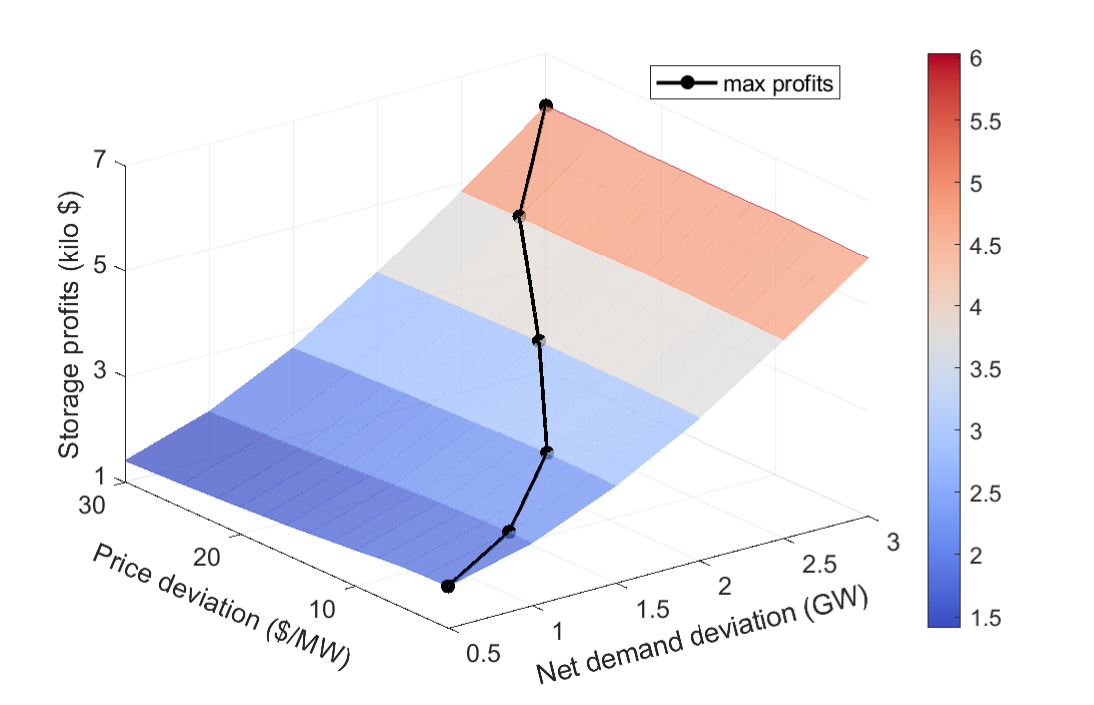}\label{fig:profit_ideal}
	}
  \caption{System cost and profits in the RTM markets under the ideal case.}
    \label{fig:ideal}
    \vspace{-1em}
\end{figure}

Moreover, in this case, due to limited storage capacity, storage participants possess near-perfect predictions of the market price. {This results in the minimum cost in Figure~\ref{fig:cost_ideal} and maximum profit in Figure~\ref{fig:profit_ideal} points coinciding, indicating alignment between storage economic withholding of maximizing profits and social welfare of minimizing system costs.}

% case 5: storage-market interaction (practical case)
\subsubsection{Practical case}\label{subsec.results_practical}

{The real-world RTM differs from the idealized cases described in Subsection~\ref{subsec.results_ideal} in several key ways. First, the market includes a large number of generators with different cost functions. Second, the aggregated capacity of individual storage participants will become significant within the next 5–10 years, due to increasing storage and renewable integration into the grid. Third, wind generation uncertainty depends on real-time wind output rather than a fixed value. For instance, when daily wind generation ranges from 0.1 GW to 10 GW, it is not realistic to assume a constant deviation of 0.5 GW across all time periods, as was done in Subsection~\ref{subsec.results_ideal}}. 

In this section, we bridge the gap between {future} real-world electricity market and the ideal cases in Section~\ref{subsec.results_ideal}, and study how storage withholding impacts social welfare under real-world market settings as mentioned above. We employ the ISO-NE test system with 76 generators and consider a storage capacity of 2.5~GW, which accounts for around 20\% of the average load. For better representatives, we use the K-means approach to generate five representative demand and wind scenarios from the one-year data, in which the wind capacity is 26~GW with an average wind capacity factor of 0.4. We consider the wind uncertainty characterized by normal distribution whose expectation is DA wind generation, and deviation is proportional to DAM wind generation as shown by the $x$-axis in Figures~\ref{fig:real}.

Figures~\ref{fig:cost_real} and \ref{fig:profit_real} show the simulated cost and profit outcomes, respectively. Notably, the locations of minimum cost and maximum profit no longer coincide, although their trends remain aligned. This divergence arises from the clearing price deviating from forecasts due to three factors: increased storage capacity, multiple wind scenarios, and wind generation fluctuations.

Increasing storage capacity amplifies the impact of its charging and discharging actions on market clearing, leading to deviations from price forecasts. The multiple scenarios, representing low-wind and high-wind days, further contribute to uncertainty. Unlike the idealized case in Figure~\ref{fig:ideal}, where uncertainty directly maps to absolute deviation, the uncertainty on the $x$-axis in Figure~\ref{fig:real} reflects a combination of wind generation variation and scenario weights.

On the positive side, Figure~\ref{fig:real} reveals that the trend of the storage price forecast deviation aligns with the system cost and storage profit, suggesting that economic withholding can generally coincide with social welfare. However, this alignment is not guaranteed in all cases.

\begin{figure}[H]%
	\centering
	
	\subfloat[System cost]{
		\includegraphics[trim = 0mm 0mm 0mm 0mm, clip, width = .5\columnwidth]{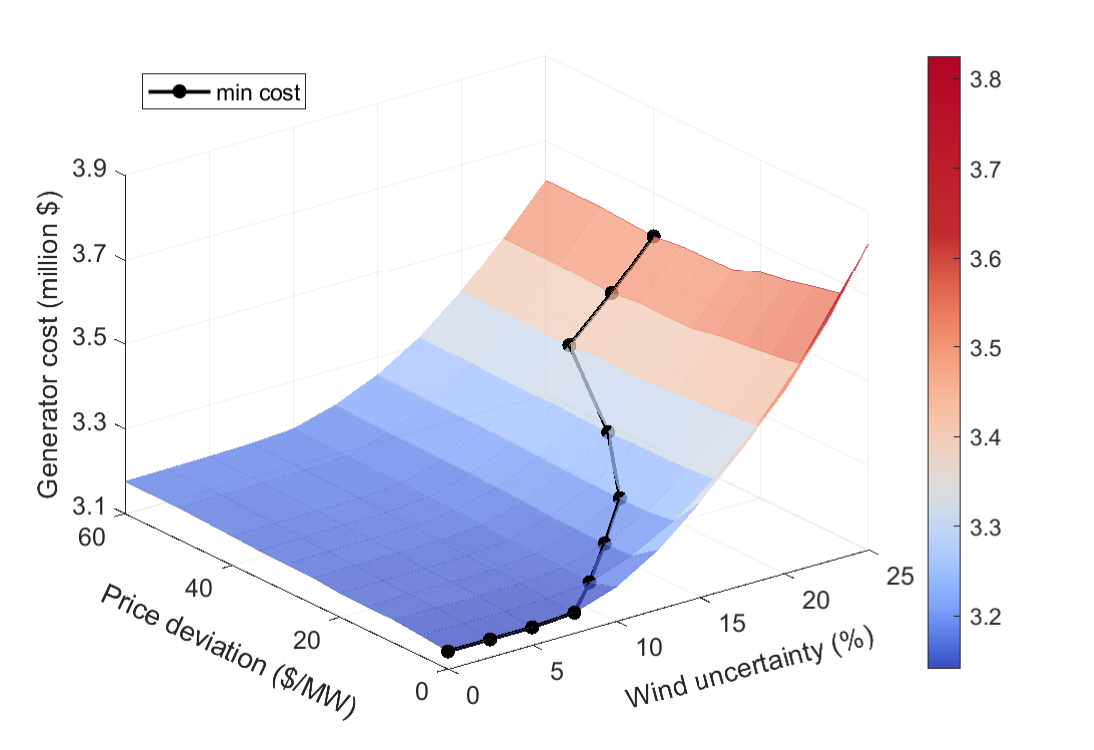}\label{fig:cost_real}%
	}
	\subfloat[Storage profit]{
		\includegraphics[trim = 0mm 0mm 0mm 0mm, clip, width = .5\columnwidth]{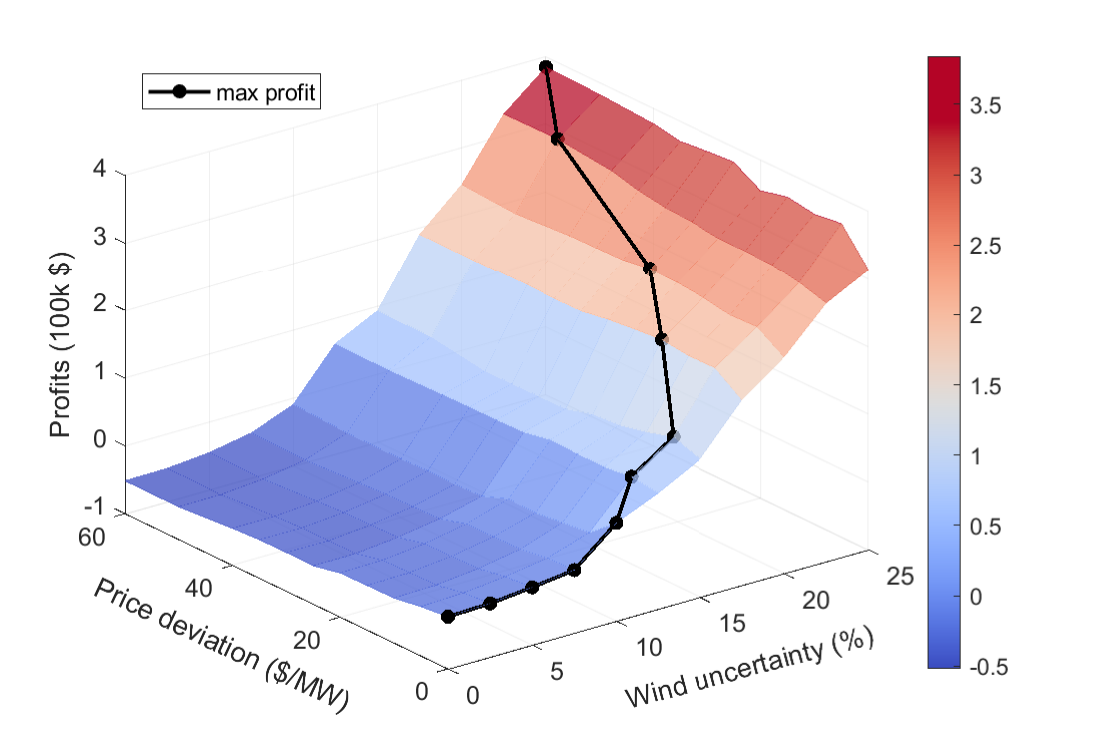}\label{fig:profit_real}
	}
  \caption{Storage participation in markets under practical cases. This setup uses practical system settings of ISO-NE with 76 generators. Storage capacity accounts for 20\% load, having a significant impact on market clearing.  The results come from 5 representative demand and wind scenarios from the one-year data.}
    \label{fig:real}
    % \vspace{-1em}
\end{figure}

\section{Conclusion}
{In this study, we introduce a novel theoretical framework to incorporate uncertainty models in price bidding design for storage participating in competitive electricity markets. }
We showed that a monotonic concave function that defines storage valuation, which reveals storage exercising of market power is limited because a higher SoC level will result in lower bidding prices. 
We also found that storage considering larger price uncertainty will result in higher storage capacity withholding, {and gave the withholding bounds if the wholesale markets have bounds on electricity prices.} We demonstrate that uncertainty introduced into storage bidding could counterpoise the system uncertainties, the effect of this counterpoise is observed to impact the dispatch of all the distributed generators participating in the market and could potentially minimize the system costs and storage profits. 

{Based on the above findings, we conclude that} energy storage is not a market manipulator but an honest player contributing to social welfare, if the system operator can share the information about renewable generation uncertainty. Crucially, we reveal that under certain conditions, the economic withholding behavior of storage can paradoxically enhance social welfare, especially when price models account for systemic uncertainties. This counterintuitive finding underscores the complexity of storage behavior in market dynamics. To substantiate our theoretical findings, we employ numerical simulations, including a case study modeled on the ISO New England system. These simulations provide empirical evidence supporting our theoretical predictions and offer insights into the practical implications of storage strategies in real-world electricity markets. {Future work could investigate the market equilibrium of storage participation, the design of market policies to incentivize storage to counteract system uncertainties, and the impact of advanced prediction methods on storage withholding behavior.}

\bibliography{main}		% expects file "myrefs.bib"

\newpage
\section*{Appendix (Online)}
\subsection{Derivation of Definition~\ref{def:es_bid}: Storage Bid Curves}\label{appendix.bid_derivation}
{
We here derive the storage bidding function in~\eqref{eq:esbids_def}. 
Storage participant is required to submit its discharging bid $o_t(p_t)$ and charging price $d_t(b_t)$ bids to the system operator, which is the marginal cost of cost functions $C_{o,t}$ and $C_{d,t}$, respectively. }

{Take the derivation of discharge price bid for example. We first consider the case that $V_t(e_t)$ is differentiable. Then we utilize \eqref{eq:c3} to calculate the partial derivative of $e_t$ to $p_t$ and $b_t$ with partial derivative chain rule. }
    \begin{subequations}~\label{eq:bb1}
    \begin{align}
        {o}_{t}(p_t) &=  \Big[ \frac{\partial C_{o,t}}{\partial p_t}\Big]^+ = \Big[\pdv{c p_t}{p_t} - \pdv{V_t}{p_t}\Big]^+ \nonumber\\
        &= \Big[c - \pdv{V_t}{e_t} \pdv{e_t}{p_t}\Big]^+ = \Big[c + \frac{\tau}{\eta}\pdv{V_t}{e_t} \Big]^+ \nonumber\\
        &= \Big[c+\frac{\tau}{\eta}v_{t}(e_{t-1}-p_t/\eta)\Big]^+ \label{eq:bid_pd_deri_og} 
    \end{align}
   {We then onsider the case of piecewise-defined $V_t(e_t)$ with breaking points, and utilize \eqref{eq:c3} $e_{t} - e_{t-1} = (-p_{t}/\eta + b_{t}\eta) \tau$ again to substitute $e_t$ with $e_{t-1}$ and discharge power since $e_{t-1}$ is a known state to the storage operator and serves as input to the bid curve. We obtain}
        \begin{align}
        {o}_{t}(p_t) &= \Big[c+\frac{\tau}{\eta}v_{t}(e_{t})\Big]^+ \nonumber\\
        &= \Big[c+\frac{\tau}{\eta}v_{t}(e_{t-1}-p_t/\eta)\Big]^+ \label{eq:bid_pd_deri} 
    \end{align}    
where the range of the bid curves follows the feasible range of discharge and charge energy as defined in \eqref{eq:obj-c1} and \eqref{eq:obj-c2}, respectively. 
Similarly, for charging bids, we have
        \begin{align}
        {d}_{t}(b_t) &=  \Big[ \partial C_{d,t} \Big]^+ 
        = 0+ \tau \eta v_{t}(e_{t-1}+b_t \eta). \label{eq:bid_pc_deri} 
    \end{align}
    \end{subequations}

\subsection{Complete market simulation formulation} \label{appendix.market}
Here we present the model of a typical two-stage energy market consisting of  the DAM for unit commitment and the RTM for sequential real-time dispatches.
Our study focuses on energy storage participation in RTM.

\subsubsection{Optimization model of day-ahead unit commitment} \label{appendix.uc}
This subsection provides a detailed description of the day-ahead dispatch model for unit commitment.

\begin{subequations}\label{eq2}
Based on the assumption {in Subsection~\ref{subsec.prelim}}, the objective function of day-ahead unit commitment minimizes daily generation costs of conventional generators and the {physical} discharge costs of energy storage {who participates in the DAM}:

\begin{align}
    \min_{g_{i,t},u_{i,t},y_{i,t},p_{t}} &\sum_{t=1}^{T}  \sum_{i=1}^{N_g} [G_{i,t}(g_{i,t}) + c^n_i \cdot u_{i,t} + c^s_i \cdot y_{i,t}] \nonumber \\
    & + \sum_{t=1}^{T} {\sum_{j=1}^{N_{e}^{DA}}} c \cdot p_{t}  
\end{align}
where $g_{i,t}$ is the electric power generation of conventional generator $i$ at period $t$. $u_{i,t}$ is a binary variable indicating whether generator $i$ is on at period $t$, and $y_{i,t}$ is a binary variable indicating whether generator $i$ turns on at period $t$. 
% $c^{m1}_i$ and $c^{m2}_i$ are generator operational cost coefficients,  
$c^n_i$ is the generator no load cost, $c^s_i$ denotes generator start-up cost. $p_{t}$ is the storage discharging energy of participant $j$ at period $t$. $N_g$ indicates the number of conventional generators. $N_{e}^{DA}$ indicates the number of storage participants in the DAM.

The constraints are:

Generator minimum and maximum limits 
\begin{align}
    \mathrm{Gmin}_i \cdot u_{i,t} & \leq g_{i,t} \leq \mathrm{Gmax}_i \cdot u_{i,t}, \label{glimit} \ 
\end{align}
where $\mathrm{Gmin}_i$ and $\mathrm{Gmax}_i$ denote the minimum and maximum generation of conventional generator $i$. 

Generator ramping constraints 
\begin{align}
    -\mathrm{RR}_i & \leq g_{i,t}-g_{i,t-1} \leq \mathrm{RR}_i + \mathrm{Gmin}_i \cdot y_{i,t} \label{gramp}\
\end{align}
where $\mathrm{RR}_i$ is the ramp rate of generator $i$. 

Generator start-up and shut-down logic constraints
\begin{align}
    y_{i,t} - z_{i,t} &= u_{i,t} - u_{i,t-1}   \\
    y_{i,t} + z_{i,t} &\leq 1
\end{align}
where $z_{i,t}$ is a binary variable indicating whether generator $i$ turns off at period $t$.

To address the real-time fluctuations caused by renewable generation, the {DAM} is required to have synchronous reserve capacity provided by conventional generators:
\begin{align}
    \sum_{i=1}^{N_g} r_{i,t} &\geq (20\%)w_t \\
    r_{i,t} &\leq \mathrm{Gmax}_i \cdot u_{i,t} - g_{i,t} \\
    r_{i,t} &\leq RR_i
\end{align}
where $w_{t}$ is accommodated wind generation during time period $t$, and $r_{i,t}$ is the reserve capacity.

Generator minimum up time constraint
\begin{align}
    \sum_{\kappa = \max\{t - \mathrm{Tup}_i + 1, 1\}}^t y_{i,\kappa} &\leq u_{i,t}   \\
    \sum_{\kappa = \max\{t - \mathrm{Tdn}_i + 1,1\}}^t z_{i,\kappa} &\leq 1-u_{i,t} 
\end{align}
where $\mathrm{Tup}_i$ and $\mathrm{Tdn}_i$ are the maximum up time and minimum down time of generator $i$, respectively.

Storage {power} constraints
\begin{align}
    &0 \leq b_{t} \leq P \label{pu_c1} \\
    &0\leq p_{t} \leq P  \label{pu_c1p}\\ 
    &\text{$b_{t}$ or $p_{t}$ is zero for any $t \in\{1,2,\dotsc, T\}$ } \label{pu_c2}
\end{align}
where \eqref{pu_c2} enforces that storage cannot charge and discharge during the same time period.

Storage SoC limits
\begin{align}
    & e_{t} - e_{t-1} = \tau  (-p_{t} /\eta + b_{t} \eta) \label{pu_c3}  \\
    & 0 \leq e_{t}  \leq E \label{pu_c4}  \
\end{align}
The electric power balance between the generation side and the load side
\begin{align}
    \sum_{i=1}^{N_g} g_{i,t} + w_t + \sum_{j=1}^{N_e} p_{t} = \hat L_t +  \sum_{j=1}^{N_e} b_{t} \label{eq:da_balance}
\end{align}
where 
the dual variable $\lambda^{DA}_t$ associated with constraint~\eqref{eq:da_balance} is the day-ahead electricity price at period $t$.

The wind generation limits
\begin{align}
    0 \leq w_t \leq \hat {W}_{t} \label{wind} 
\end{align}
where $\hat{W}_{t}$ denotes the day-ahead wind generation forecast at period $t$.

\end{subequations}

Note that in the main text equation~\eqref{eq:market}, for clarity, we use $g_{t} = \sum_{i=1}^{N_g} g_{i,t}$ to denote the aggregated generation power. Then $G_{t}$ indicates the aggregated generation costs.

\subsubsection{Real-time market clearing} \label{appendix.rt} 
\begin{subequations}\label{eq3}
The real-time dispatch aims to minimize the generation costs of generators and the opportunity costs of storage at time period $t$: 
\begin{align}
    &\min \sum_{i=1}^{N_g} G_{t,i}(g_{t,i}) +  {\sum_{j=1}^{N_e} O_t (p_t) - \sum_{j=1}^{N_e} D_t (b_t)}
\end{align}
{where cost functions $O_t(p_t) = \int o_t \,dp_t$ and $D_t(b_t)= \int d_t \,db_t$. In practice, system operators might require a piece-wise linear discharge and charge bids~\citep{caiso2021energystorage}, which indicates the system operators may approximate $O_t(p_t) \approx \sum_k o_{t,k} p_{t,k}, D_t(b_t) \approx \sum_k d_{t,k} b_{t,k}$, in which $o_{t,k}$ and $d_{t,k}$ are discharge and charge bids at segment $k$. $p_{t,k}$ and $b_{t,k}$ are discharge and charge power at period $t$ from segment $k$. Then $p_t = \sum_k p_{t,k}$ and $b_t = \sum_k b_{t,k}$. }

The remaining constraints are analogous to those in the day-ahead unit commitment model, but they are applied for each time period $t=1$, $t=2$, ..., $t=T$ in the RTM dispatch:
\begin{align}
    & \mathrm{Gmin}_i \cdot u_{i,t} \leq g_{i,t} \leq \mathrm{Gmax}_i \cdot u_{i,t} \label{glimit1} \\
    & -\mathrm{RR}_i\leq g_{i,t}-g_{i,t-1} \leq \mathrm{RR}_i + \mathrm{Gmin}_i \cdot y_{i,t}  \label{gramp1} \\
    & \sum_{i} g_{i,t} + w_t +   \sum_{\mathcal{S}} p_{t} = L_t   - \sum_{\mathcal{S}} b_{t} \label{gbalance1} \\
    & 0 \leq w_t \leq W_t \label{wind1} \\
    & 0 \leq p_{t} \leq P \label{pr_c1} \\
    & 0 \leq b_{t} \leq P \label{pr_c1p}\\ 
    & \text{$p_{t}$ or $b_{t}$ is zero at $t$ } \label{pr_c2} \\
    & e_{t} - e_{t-1} = \tau(-p_{t}/\eta + b_{t}\eta) \label{pr_c3} \\
    & 0 \leq e_{t} \leq E \label{pr_c4} \
\end{align}
where $W_t$ is the real-time wind {capacity}, which has deviations from the day-ahead forecast $\tilde{W}_{t}$.

Constraints \eqref{glimit1} and \eqref{gramp1} set the power generation limits and the ramping limits of conventional generators, respectively. Note here $u_{i,t}$ and $y_{i,t}$ are known values from {DAM unit commitment results}. Therefore, the real-time dispatch is a standard convex optimization problem with quadratic objective functions and linear constraints, which is efficient to solve. Constraint \eqref{gbalance1} is the power balance constraint between the supply side and the demand side. Constraint \eqref{wind1} indicates the wind generation should not exceed the wind capacity. Constraints \eqref{pr_c1} and \eqref{pr_c1p} specify the limits on storage charging and discharging power, respectively, and constraint \eqref{pr_c2} ensures that storage cannot charge and discharge simultaneously during the same time period. Equation \eqref{pr_c3} defines the relation between SoC level and charging/discharging power, where $e_{t-1}$ is a known value, because results and states at $1,2,...,t-1$ have been obtained when executing real-time dispatch at period $t$. Constraint \eqref{pr_c4} is the SoC limit at period $t$.

\end{subequations}

\subsection{Proof of Proposition~\ref{pro:concave}} \label{appendix.concave}

{The proof of Proposition~\ref{pro:concave} starts from period $T-1$. Given a concave and differentiable end value function $V_{T}(e_{T})$, function $Q_{T-1}(e_{T-1})$ is a continuous and piecewise function, and is differentiable apart from some isolated points (called breaking points) at which the inequality constraint of problem~\eqref{eq:esbid_opti} shifts between being binding and non-binding. We show function $Q_{T-1}(e_{T-1})$ is concave at the intervals over which $Q_{T-1}(e_{T-1})$ is differentiable. We then demonstrate the concavity of $Q_{T-1}(e_{T-1})$ still holds at breaking points, which subsequently implies the concavity of $V_{T-1}(e_{T-1})$. Finally, by working recursively we can prove the concavity of $V_{t}(e_{t})$ for $t = T-2, T-3,... 1$, thereby concluding the proof. }

At period $T-1$, assume that there exists an optimal solution of problem~\eqref{eq:esbid_opti}, $p_{T-1}^*$ and $b_{T-1}^*$, then consider the Karush–Kuhn–Tucker (KKT) condition for charging and discharging cases, respectively: 
\begin{subequations}~\label{eq.pro1_kkt}
\begin{align}
    -\hat \lambda_{T} + \pdv{V_{T}(e_{T})}{b_T^*}  + \overline \mu_e  + \overline \mu_b - \underline \mu_b &= 0  \\
    \hat \lambda_{T} -c + \pdv{V_{T}(e_{T})}{p_T^*}  + \underline \mu_e  + \overline \mu_p - \underline \mu_p &= 0
\end{align}
\end{subequations}
where $\underline \mu_e$ and $\overline \mu_e$ is the dual variables associated with inequality constraint~\eqref{eq:c4}, where the underline and overline dual variables correspond to the lower limit and upper limit, respectively. $\bar \mu_b$  and $\underline \mu_b$ are the dual variables of inequality constraint~\eqref{eq:c2}. $\bar \mu_p$ and $\underline \mu_p$ are the dual variables of inequality constraint~\eqref{eq:c1}. At equilibrium, using the complementary slackness conditions gives $\underline \mu_b = 0$ and $\underline \mu_p = 0$. 

We now analyze the maximized cost $Q_{T-1}(e_{T-1})$. The optimal solution $p_{T-1}^*$ and $b_{T-1}^*$ and the maximized cost $Q_{T-1}$ are dependent on price forecast $\hat \lambda_{T}$ and storage SoC $e_{T-1}$. {Since the objective function $Q_{T-1}(e_{T-1})$ is continuous and $e_{T-1}$ has a linear relationship with $p_{T-1}^*$ and $b_{T-1}^*$  a small change in $e_{T-1}$ will result in correspondingly small changes in the optimal solution. }
At the differentiable intervals, consider the partial differential of $Q_{T-1}$ with respect to $e_{T-1}$:
 \begin{align} \label{eq.pro1_dQ}
    \pdv{Q_{T-1}}{e_{T-1}} &= \hat \lambda_{T}(\pdv{p_{T}^*}{e_{T-1}} - \pdv{b_{T}^*}{e_{T-1}})          - c\pdv{p_{T}^*}{e_{T-1}} \\
             &~~~~+ \pdv{V_{T}(e_{T})}{e_{T-1}} \nonumber
\end{align}

{
To investigate the result of~\eqref{eq.pro1_dQ}, consider one scenario that storage reaches the capacity limit such that constraint~\eqref{eq:c4} is binding, which results in $e_T = E$ or 0. Thus, $b_T^* = \frac{E - e_{T-1}}{\tau \eta}$ or $p_T^* = e_{T-1}\eta /\tau$ becomes a constant. We then get $\pdv{Q_{T-1}}{e_{T-1}} = 0$, indicating the second order derivative of $Q_{T-1}(e_{T-1})$ is zero. Thus, $Q_{T-1}(e_{T-1})$ is concave in this scenario.
}

{
Consider another scenario that constraint~\eqref{eq:c4} is not binding, which indicates storage does not reach capacity limits, then $\bar \mu_e = 0$ for charging and $\underline \mu_e = 0$ for discharging. We consider five cases: In case 1, storage is fully discharging at $p_T^* = P$; In case 2, storage is partially charging where $0<p_T^* < P$; In case 3, storage is neither charging nor discharging, where $p_T^* = b_T^* = 0$; In case 4, storage is partially charging  $0<b_T^* < P$; In case 5, storage is fully charging at $b_T^* = P$. }
In case 1 and 5, $p_T^*$ and $b_T^*$ are constants, then we obtain $\pdv{e_{T}}{e_{T-1}} = 1$ from equality constrain~\eqref{eq:c3}.
In cases 2 and 4, from complementary slackness we know $\bar \mu_b = 0$ and $\bar u_p = 0$, respectively, which gives the result of $\pdv{e_{T}}{e_{T-1}} = 0$ from equality constrain~\eqref{eq:c3}. Then $\pdv{p_T^*}{e_{T-1}}=\eta/\tau$ in case 2 and $\pdv{b_T^*}{e_{T-1}}= -1/(\tau \eta)$ in case 4, and $\pdv{e_{T}}{e_{T-1}} = 1$ for both cases. For clarity, the result is summarized in equation~\eqref{eq.pro1_bind}.
% From equation~\eqref{eq:c3} and KKT conditions, we obtain:
\begin{subequations} \label{eq.pro1_bind}
 \begin{align}
\pdv{p_{T}^*}{e_{T-1}} &=
    \begin{cases}
    0; ~~\eqref{eq:c1} \text{ is binding}  \label{eq:v4a}
    \\
    \eta/\tau;~~ \text{Otherwise}
     \end{cases}
     \\
\pdv{b_{T}^*}{e_{T-1}} &=
     \begin{cases}
    0; ~~\eqref{eq:c2} \text{ is binding}  \label{eq:v4b}
    \\
    -1/(\tau\eta );~~ \text{Otherwise}
     \end{cases}
     \\
\pdv{e_{T}}{e_{T-1}}  &=
     \begin{cases}
    1; ~~\eqref{eq:c1}~\text{or } \eqref{eq:c2}~\text{is binding} \label{eq:v4c}
    \\
    0;~~ \text{Otherwise}
     \end{cases}
\end{align}
\end{subequations}

Now calculate the value of $\pdv{Q_{T-1}}{e_{T-1}}$ under the five cases:
\begin{subequations}
\par
\begin{itemize}
\item Case 1: $p_{T}^*=P$ and $b_{T}^*=0$. In this case, the battery is discharging at maximum power. The constraint \eqref{eq:c1} is binding. From result~\eqref{eq.pro1_bind}, using $\pdv{p_{T}^*}{e_{T-1}} = 0$ and $\pdv{b_{T}}{e_{T-1}} = 0$ gives
\begin{align}
   \pdv{Q_{T-1}}{e_{T-1}} = \pdv{V_{T}}{e_{T}}
    \label{eq:q1}
\end{align}

\item Case 2: $0 < p_{T}^* < P$ and $b_{T}^*=0$. In this case, the battery is partially discharging between the lower and upper power boundaries. Constraint \eqref{eq:c1} is not binding. Using result~\eqref{eq.pro1_bind}  gives
\begin{align}
\pdv{Q_{T-1}}{e_{T-1}} = \frac{\eta}{\tau}(\hat \lambda_{T}- c)
 \label{eq:q2}
\end{align}

\item Case 3: $p_{T}^*=0$ and $b_{T}^*=0$. In this case, the battery is neither charging nor discharging. Therefore, 
\begin{align}
  \pdv{Q_{T-1}}{e_{T-1}} = \pdv{V_{T}}{e_{T}}
 \label{eq:q5}
\end{align}

\item Case 4: $p_{T}^*=0$ and $0<b_{T}^* < P$. In this case, the battery is only charging between the lower and upper power boundaries, i.e, $0 < b_{T}^* < P$. Constraint \eqref{eq:c2} is not binding. Consequently, 
\begin{align}
  \pdv{Q_{T-1}}{e_{T-1}} = \frac{1}{\tau \eta} \hat \lambda_{T}
 \label{eq:q4}
\end{align}

\item Case 5: $p_{T}^*=0$ and $b_{T}^*=P$. This case is similar to case 1, where $b_{T}^*$ is at the upper boundary, and constraint \eqref{eq:c2} is binding. Therefore, we have a similar result as case 1.
\begin{align}
  \pdv{Q_{T-1}}{e_{T-1}} = \pdv{V_{T}}{e_{T}}
 \label{eq:q3}
\end{align}

\end{itemize}
\end{subequations}

Then we calculate the second-order derivative of $Q_{T-1}(e_{T-1})$ based on the results of~\eqref{eq:q1}-\eqref{eq:q5}:
 \begin{align}
  {\partial^2 Q_{T-1}(e_{T-1})}=
    \begin{cases}
    \{0 \}; ~~ \text{ Case 2, 4} 
    \\
     \frac{\partial^2 V_{T}}{\partial^2 e_{T}} ; ~~ \text{  Case 1, 3, and 5}
     \end{cases}
 \end{align}
Summarizing the above results, given that $V_{T}(e_T)$ is concave and differentiable and $Q_{T-1}(e_{T-1})$ is in the differentiable intervals, we have $\frac{\partial^2 V_{T}}{\partial^2 e_{T}} \leq 0$. Therefore, $Q_{T-1}(e_{T-1})$ is concave under all given $\hat \lambda_t$.

Now we extend our result to the breaking points of $Q_{T-1}(e_{T-1})$. 
At the breaking points, consider the subderivative of $Q_{T-1}$ with respect to $e_{T-1}$:
\begin{align}
    &\partial{Q_{T-1}}(e_{T-1}) = \cap_{z \in \textbf{dom} Q_{T-1}} \{ g | {Q_{T-1}}(z)\nonumber \\
    &~~~~~~~~~~~~~~~~~~~~~~~~~~~\geq {Q_{T-1}}(e_{T-1}) + g^T(z-e_{T-1}) \}
\end{align}
where $\partial{Q_{T-1}}(e_{T-1})$ is a closed set. From the analysis above, $Q_{T-1}(e_{T-1})$ is concave at both sides of every breaking point. Thus, the second-order subderivative of $Q_{T-1}$, i.e., $\partial^2{Q_{T-1}}(e_{T-1})$, is a set with all elements less than or equal to zero, which means $Q_{T-1}(e_{T-1})$ is a concave function. {Since $V_{T-1}(e_{T-1})$ is the conditional expectation of $Q_{T-1}(e_{T-1}|\hat \lambda_{T-1})$ over $\hat \lambda_{T-1}$, is also a concave function.}

Lastly, {given the concavity of $V_{T-1}(e_{T-1})$}, by working recursively, we can conclude that $V_{T-2}(e_{T-2})$, $V_{T-3}(e_{T-3})$,..., $V_{1}(e_{1})$ are all concave, which finishes the proof.

\subsection{Proof of Theorem~\ref{the:ubd}} \label{appendix.gau}
Theorem~\ref{the:ubd} says $v_{t}(0)$ is unbounded if price $\hat \lambda_{t+1}$ follows Gaussian distribution $N(\mu_{t+1}, \sigma_{t+1})$ with fixed mean value $\mu_{t+1}$ and unbounded standard deviation $\sigma_{t+1}$. 

To prove this theorem, we will show that $v_{t}(0)$ is monotonically increasing with respect to $\sigma_{t+1}$ when $\sigma_{t+1} \geq \underline{\sigma}$. Thus, given an arbitrary $\theta$, we can always find a $\sigma_{t+1}$ such that $v_{t}(0) \geq \theta$.

Consider the analytical expression of $q_{t}(e|\hat \lambda_{t+1})$ based on the result in~\cite{zheng2022arbitraging}, {when SoC $e = 0$, }
the analytical expression of $q_{t}(0)$ is 
\begin{align}\label{eq:q0_nik}
    &q_{t}(0|\hat \lambda_{t+1}) = \\
    &\begin{cases}
    v_{t+1}(E_c)  & \text{if $\hat \lambda_{t+1}\leq v_{t+1}(E_c)\eta$} \\
    \hat \lambda_{t+1}/\eta  & \text{if $ v_{t+1}(E_c)\eta < \hat \lambda_{t+1} \leq v_{t+1}(0)\eta$} \\
    v_{t+1}(0) & \text{if $ v_{t+1}(0)\eta < \hat \lambda_{t+1} \leq [\frac{v_{t+1}(0)}{\eta} + c]^+$} \\
    (\hat \lambda_{t+1}-c)\eta & \text{if $ [\frac{v_{t+1}(0)}{\eta} + c]^+ < \hat \lambda_{t+1}$} 
    \end{cases} \nonumber
\end{align}
where curve of $q_{t}(0|\hat \lambda_{t+1})$ with respect to $\hat \lambda_{t+1}$  is sketched in Figure~\ref{fig:q0}.

\begin{figure}[h]
    \centering
    \resizebox{0.39\columnwidth}{!}{
        \begin{tikzpicture}
        \draw[->,thick] (-2,0) -- (7,0) node[right] {$\hat \lambda_{t+1}$};
        \draw[->,thick] (0,-0.5) -- (0,4) node[above] {$q_{t}(0|\hat \lambda_{t+1})$};
        
        \draw[domain=-2:-0.2] plot (\x,{1}) node[below left] {$v_{t+1}(E_c)$};
        \draw[domain=-0.2:1] plot (\x,{1});
        
        \draw[domain=1:1.6] plot (\x,{\x}) node[left] {$\frac{1}{\eta}  \hat \lambda_{t+1}$};
        \draw[domain=1.6:2] plot (\x,{\x});
        
        \draw[domain=2:3.3] plot (\x,{2}) node[below] {$v_{t+1}(0)$};
        \draw[domain=3.3:5] plot (\x,{2});
        
        \draw[domain=5:6] plot (\x,{.8*(\x-5)+2}) node[left] {$(\hat \lambda_{t+1}-c)\eta ~$};
        \draw[domain=6:7] plot (\x,{.8*(\x-5)+2});
        
        \filldraw[fill= red!50,opacity=0.3] (-2,0) -- (2.5,0) -- (2.5,1) -- (-2,1) -- cycle;
        \filldraw[fill= red!50,opacity=0.3] (2.5,0) -- (7,0) -- (7,3.6) -- cycle;
        
        % \draw[-,thick] (1,0.05) -- (1,-0.05) ;
        \draw[-,thick] (2.5,0.05) -- (2.5,-0.05) node[below] {$c$};
        % \draw[-,thick] (2,0.05) -- (2,-0.05) ;
        % \draw[-,thick] (5,0.05) -- (5,-0.05) ;
        \end{tikzpicture}
        }
    \caption{Curve of $q_{t}(0|\hat \lambda_{t+1})$. Note that the value of $v_{t+1}(E_c)$ is positive according to Lemma~\ref{lem:ubd_pre}.}
    \label{fig:q0}
\end{figure}

Now we show that $v_{t}(0)$ is  monotonically increasing if $\sigma_{t+1} \geq \underline{\sigma}$, where $\underline{\sigma}$ is a constant at period $t+1$.
Recall that $v_{t}(0)$ is the expectation of $q_{t}(0)$
\begin{align}
    v_{t}(0) & = \int_{-\infty}^{\infty} q_{t}(0|\hat \lambda_{t+1} = x) \cdot f_{\hat \lambda_{t+1}}( x)~d x \nonumber  \\
    &\geq \int_{c}^{+\infty} v_{t+1}(E_c) f_{\hat \lambda_{t+1}}(x) d x \nonumber \\
    &~~ + \int_{c}^{+\infty} \eta(x-c) f_{\hat \lambda_{t+1}}(x) d x \label{the6:eq1} 
\end{align}
where $f_{\hat \lambda_{t+1}}(x) $ is the {Probability Density Function (PDF) function} of price $\hat \lambda_{t+1}$. {$q_{t}(0)$ in~\eqref{eq:q0_nik} is always positive as $v_{t+1}(e)$ is positive in Lemma~\ref{lem:ubd_pre}. } The two integral terms in~\eqref{the6:eq1} correspond to the red areas in Figure~\ref{fig:q0}.

Then by calculating the derivative of the two integral terms in equation~\eqref{the6:eq1} with respect to $\sigma_{t+1}$, we know
\begin{align}
    & \pdv{v_{t}(0)}{\sigma_{t+1}}  \geq  v_{t+1}(E_c) \cdot \frac{\mu_{t+1}-c}{\sigma_{t+1}}f_{\hat \lambda_{t+1}}(c) \nonumber \\ 
    &~~~~~~~~~~~~~-\frac{(\mu_{t+1}-c)(\mu_{t+1}+c)}{\sigma_{t+1}}f_{\hat \lambda_{t+1}}(c)  \nonumber \\ 
    &~~~~~~~~~~~~~+  \frac{(\mu_{t+1}-c)^2}{\sigma_{t+1}}f_{\hat \lambda_{t+1}}(c) + {\sigma_{t+1}}f_{\hat \lambda_{t+1}}(c) \nonumber \\ 
    &~~ = \frac{[(v_{t+1}(E_c)-2c)(\mu_{t+1}-c) + \sigma_{t+1}^2]f_{\hat \lambda_{t+1}}(c)}{\sigma_{t+1}} ~\label{the6:eq2}
\end{align}

If $(v_{t+1}(E_c)-2c)(\mu_{t+1}-c) \geq 0$, it is evident that $v_{t}(0)$ is monotonically increasing with $\sigma_{t+1}$. In this case, $\underline{\sigma} = 0$. 

If $( v_{t+1}(E_c)-2c)(\mu_{t+1}-c) < 0$, condition $\sigma_{t+1} > \sqrt{-(v_{t+1}(E_c)-2c)(\mu_{t+1}-c)}$ is needed to guarantee the right hand side of \eqref{the6:eq2} larger than 0, such that $v_{t}(0)$ monotonically increases. In this case, we obtain that $\underline{\sigma} =\sqrt{-( v_{t+1}(E_c)-2c)(\mu_{t+1}-c)}$. Therefore, we can always find a $\underline{\sigma}$ that guarantees the monotone increasing of $v_{t}(0)$, which finishes the proof.

\subsection{Proof of Lemma~\ref{lem:ubd_pre}}\label{appendix.lemgau}

Given storage duration is long enough such that $2E_c \leq E$ and {the concavity of $V_T(e_T)$, we know that} $v_t(e_t)$ is monotonically decreasing with respect to $e_t$. Then we can conclude that $v_T(0) \geq v_T(E_c) \geq  v_T(2E_c) \geq  v_T(E) \geq 0$. 

Now let $t=T-1$, based on equation~\eqref{eq:q0_nik}, we have
\begin{align}
    v_{t}(E_c) & = \int_{-\infty}^{\infty} q_{t}(E_c|\hat \lambda_t=x) f(x)dx \label{lem3:eq1} \\
    \geq~ &~  v_{t+1}(2E_c) \int_{-\infty}^{v_{t+1}(2E_c)}  f(x)dx \nonumber\\
    & + v_{t+1}(2E_c) \int_{v_{t+1}(2E_c)\eta}^{v_{t+1}(E_c)\eta} f(x)dx \nonumber\\
    & + v_{t+1}(E_c) \int_{v_{t+1}(E_c)\eta}^{v_{t+1}(E_c)/\eta + c} f(x)dx \nonumber\\
    & + v_{t+1}(E_c)\int_{v_{t+1}(E_c)/\eta + c}^{v_{t+1}(0)/\eta + c} f(x)dx \nonumber\\
    & + v_{t+1}(0)\int_{v_{t+1}(0)/\eta + c}^{\infty} f(x)dx \nonumber\\
   \geq ~& v_{t+1}(2E_c) \geq 0, \nonumber
\end{align}
By recursively applying equation~\eqref{lem3:eq1} for time periods $t = T-2, ..., 1$, we establish the inequality $v_t(E_c)\geq 0, \forall t \in \mathcal{T}^{'} = \{1,2,3,...,T-1\}$. Hence we finish the proof.

\subsection{Proof of Corollary~\ref{cor:ubd_general}}\label{appendix.soc}
\begin{subequations}

To prove that the value of $v_t(0)$ can become arbitrarily high with particular price distributions without changing the price expectation, we adopt the finite element method. 
{In particular, while we considered a price distribution model with only two realizations, we can generalize this result to all possible price distributions using the finite element methods by considering any price distribution as the sum of price pairs, then it is trivial to see in any sub-pair distribution models.}  

Now consider a price expectation $\hat \lambda_{t+1}$ and two alternative price values $\pi_{t+1}$ and $\gamma_{t+1}$ such that
\begin{align}
    \pi_{t+1} \geq \mu_{t+1} \geq \gamma_{t+1} \label{lem:eq3}
\end{align}
that satisfy price distribution weights $\alpha_{t+1}>0$ and $\beta_{t+1}>0$ such that
\begin{align}
    \alpha_{t+1} \pi_{t+1} + \beta_{t+1} \gamma_{t+1} &= \mu_{t+1} \label{lem:eq4}\\
    \alpha_{t+1} + \beta_{t+1} &= 1 \;. \label{lem:eq5}
\end{align}
Thus, we consider two price distributions with the same expectation $\mu_{t+1}$: 1) $\hat \lambda_{t+1} = \{\mu_{t+1}|f_{\hat \lambda_{t+1}}(\mu_{t+1}) = 1\}$; 2) $\hat \lambda_{t+1} = \{\pi_{t+1}, \gamma_{t+1}|f_{\hat \lambda_{t+1}}(\pi_{t+1}) = \alpha_{t+1}, f_{\hat \lambda_{t+1}}(\gamma_{t+1}) = \beta_{t+1}\}$. Our goal is thus to prove that given a $\mu_{t+1}$, there exist \textit{a set of} $\pi_{t+1}$, $\gamma_{t+1}$, $\alpha_{t+1}$, and $\beta_{t+1}$ that satisfies \eqref{lem:eq3}--\eqref{lem:eq5} and
\begin{align}\label{lem:eq6}
    v_{t}(0) = \alpha_{t+1} q_{t}(0|\pi_{t+1}) + \beta_{t+1} q_{t}(0|\gamma_{t+1}) \geq \theta
\end{align}
recall $\theta$ is {an} arbitrary bid value.

By observing \eqref{eq:q0_nik} we see that given $\eta\in[0,1]$, $q_{t}(0)$ monotonically increases with $\hat \lambda_{t+1}$. Yet, the lowest value possible is bounded by $v_{t+1}(E_c)$, but the high-value scales with $\hat \lambda_{t+1}$. Thus, we can reasonably set the lower price $\gamma_{t+1}$ to $v_{t+1}(E_c)\eta$, and the high price value $\pi_{t+1}$ is at least greater than $[v_{t+1}(0)/\eta + c]^+$. Then substitute this into \eqref{lem:eq6}, we have
\begin{align}
    \alpha_{t+1} (\pi_{t+1}-c)\eta + \beta_{t+1} v_{t+1}(E_c) &\geq \theta \\
      \frac{\theta-\beta_{t+1} v_{t+1}(E_c)}{\alpha_{t+1}\eta} + c &\leq \pi_{t+1}
\end{align}

Then, summarizing our results, for an arbitrary real value $\theta$, the following price distribution will satisfy $v_{t}(0) \geq \theta$
\begin{align}
    f(\hat \lambda_{t+1}) = \begin{cases}
        \beta_{t+1} \text{ if } \hat \lambda_{t+1} = v_{t+1}(E_c) \\
        \alpha_{t+1} \text{ if } \hat \lambda_{t+1} = \frac{\theta-\beta_{t+1} v_{t+1}(E_c)}{\alpha_{t+1}\eta} + c
    \end{cases}
\end{align}
while ensuring the price expectation is $\mu_{t+1}$ by satisfying \eqref{lem:eq3}--\eqref{lem:eq5}. {Then we use finite element method, which} finishes the proof of this Corollary.
\end{subequations}

\subsection{Proof of Corollary~\ref{cor:ubd_interval}} \label{appendix.ubup}
\begin{subequations}
{Given} $e>E_c$ and a sufficiently high price $\hat \lambda_{t+1} > [v_{t+1}(e-\tau P/\eta)/\eta + c]^+$, we use the same pricing model as in \eqref{lem:eq3}--\eqref{lem:eq6} which ensures the price distribution mean is $\mu_t$:
\begin{align}
        &v_{t}(e) \geq \theta \\
        &\alpha_{t+1} v_{t+1}(e-\tau P/\eta) + \beta_{t+1} v_{t+1}(e+E_c) \geq \theta \\
        &v_{t+1}(e-\tau P/\eta) \geq \frac{\theta-\beta_{t+1} v_{t+1}(e+E_c)}{\alpha_{t+1}} 
        % &\theta \leq  v_{t}(e) \\
        % &\theta \leq  \alpha_{t+1} v_{t+1}(e-P/\eta) + \beta_{t+1} v_{t+1}(e+P\eta)\\
        % &\frac{\theta-\beta_{t+1} v_{t+1}(e+P\eta)}{\alpha_{t+1}}  \leq  v_{t+1}(e-P/\eta)
        \label{lem2:eq1}
\end{align}
Thus, using the assumed price distribution model to make the value function $v_{t}(e)$  greater than $\theta$, we need to ensure \eqref{lem2:eq1}, which states $v_{t+1}(e-\tau P/\eta)$ must be greater than a given value. Thus, recursively, by moving forward in time, we will reach $e-\tau P/\eta < 0$ in if $t-\kappa \geq e\eta/P$. Then using Proposition~\ref{cor:ubd_general}, we can set the value function at zero SoC to any high values with a price distribution with fixed expectations. 

Now we turn to Proposition~\ref{def:es_bid}, which shows the storage bids proportionally correlate with the marginal value function $v_t$. Hence, we finished the proof.
\end{subequations}

\subsection{Proof of Theorem~\ref{the:bound}}\label{appendix.main2}
We follow the proof framework of Corollary~\ref{cor:ubd_general}. We update our price model to include the price bounds as
\begin{subequations}
\begin{align}
    \overline{\lambda} \geq \pi_{t+1} \geq \mu_{t+1} \geq \gamma_{t+1} \geq \underline{\lambda} \label{the2:eq2}
\end{align}
together the price model \eqref{lem:eq4}--\eqref{lem:eq5}. Using the monotonic decreasing property from Proposition~\ref{pro:concave}, 
we focus on $v_t(0)$, which {indicates the marginal value function at zero SoC. Recall $v_t(e_t)$ is monotonically decreasing.} Then, using \eqref{eq:q0_nik}, starting from the last time period with final SoC opportunity values, we can calculate $v_{t}(0)$ as
\begin{align}
    v_{t}(0) &= \alpha_{t+1} (\pi_{t+1}-c) \eta + \beta_{t+1} v_{t+1}(E_c) \label{the2:eq1}\\
    &= \frac{\mu_{t+1}-\gamma_{t+1}}{\pi_{t+1}- \gamma_{t+1}} (\pi_{t+1}-c) \eta \nonumber
    \\
    &~~~~+ \frac{\pi_{t+1}-\mu_{t+1}}{\pi_{t+1}- \gamma_{t+1}} v_{t+1}(E_c) \nonumber
    \\
    &= (\mu_{t+1}-\gamma_{t+1})\eta(1 + \frac{\gamma_{t+1}-c}{\pi_{t+1}-\gamma_{t+1}}) \nonumber\\
    &~~~~+ v_{t+1}(E_c)(1 +\frac{\gamma_{t+1}-\mu_{t+1}}{\pi_{t+1}-\gamma_{t+1}}) \nonumber\\
    &= (\mu_{t+1}-\gamma_{t+1})\eta + v_{t+1}(E_c) \nonumber\\
    &~~~~ + \frac{(\mu_{t+1}-\gamma_{t+1})\eta(\gamma_{t+1}-c)}{\pi_{t+1} - \gamma_{t+1}} \nonumber \\
    &~~~~ +\frac{(v_{t+1}(E_c)(\gamma_{t+1}-\mu_{t+1})}{\pi_{t+1} - \gamma_{t+1}} \nonumber
\end{align}
Then given $c\geq \gamma_{t+1}$, $\mu_t \geq \gamma_{t+1}$ and $v_{t+1}(E_c) \geq 0$, \eqref{the2:eq1} is an increasing function to $\pi_{t+1}$ and a decreasing function to $\gamma_{t+1}$, hence $v_{t}(0)$ value is maximized if $\pi_{t+1} = \overline{\lambda}$ and $\gamma_{t+1} = \underline{\lambda}$ for all $v_{t+1}(E_c)$. Then we generalize this result to all possible price distributions using the finite element method by considering any price distribution as the sum of price pairs, as in \eqref{the2:eq2}, then it is trivial to see in any sub-pair distribution models. The two price realizations will be at the price bounds. 

According to the previous result, we can consider the following price process at period $t+1$: $f_{\hat \lambda_{t+1}}(\overline{ \lambda}) = \alpha_{t+1}$, $f_{\hat \lambda_{t+1}}(\underline{\lambda}) = \beta_{t+1}$, with $\alpha_{t+1} + \beta_{t+1} = 1$ and the expectation is $\mu_{t+1}$, will maximize $v_{t}(0)$ for any $v_{t+1}(E_c)$. On the other hand, note the temporal transitive property of $q_{t}(0)$ that the value of $v_t(0)$ will be passed to $v_{t+1}(E_c)$ in the previous time periods for high price results. Then $v_{t+1}(E_c)$ becomes the input to the previous zero SoC value function. Hence, \eqref{the2:eq1} forms a recursive calculation with 
\begin{align}
    v_{t}(0) &= \alpha_{t+1} (\pi_{t+1}-c) \eta + \beta_{t+1} v_{t+1}(E_c) \nonumber \\
    &\leq \alpha_{t+1} (\pi_{t+1}-c) \eta + \beta_{t+1} v_{t+1}(0) \nonumber \\
    & = \alpha_{t+1} (\pi_{t+1}-c) \eta  \\
    &~~~~~~~~+ \beta_{t+1} \big[ \alpha_{t+2} (\pi_{t+2}-c) \eta + \beta_{t+2} v_{t+2}(E_c) \big] \nonumber  \\
    &\leq \dotsc \nonumber
\end{align}
which provides the upper bound for the marginal value function $v_{t}(e)$. Then applying equation~\eqref{eq:bid_pc} to design bids gives the result in \eqref{eq:the2}, hence finishing the proof.
\end{subequations}

\subsection{Proof of Proposition~\ref{pro:Lag}} \label{appendix.Lag}

The idea of proving Proposition~\ref{pro:Lag} is to demonstrate the RTM clearing price is monotonically increasing with respect to net demand. 

For clarity, we first denote net demand $D_t - w_t = x_t$, then
\begin{align}
    &L(g_{t}, p_t, b_t, x_t)  \nonumber \\
    &~~=G_t(g_{t}) + \sum_{j=1}^{N_e} O_t(p_t) -  \sum_{j=1}^{N_e} D_t(b_t)  \nonumber \\
    &~~~ + \lambda_t( g_t + p_t - b_t - x_t)   \label{eq:ap_L}
\end{align}
Now we consider net demand $x_t$ is a random variable independent from $g_{t}, p_t, b_t$. Given storage capacity is small enough, we can ignore the terms $ \sum_{j=1}^{N_e} O_t$ and $ \sum_{j=1}^{N_e} D_t$. According to the KKT condition and the derivative chain rule, we know
\begin{align}
    & \pdv{L(g_{t}, p_t, b_t, x_t)}{x_t}  \approx \pdv{G_t(g_{t})}{g_{t}} - \lambda_t = 0
    \label{eq:ap_dL}
\end{align}
 Therefore, we can take the derivative of both sides of equation~\eqref{eq:ap_dL} using the derivative chain rule:
\begin{align}
    \pdv{\lambda_t}{x_t} = \frac{\partial^2 G_t(g_{t})}{\partial g_{t}^2} \geq 0
    \label{eq:ap_dL2}
\end{align}
Considering functions $G_t(g_{t})$ is a convex function for all $t \in \mathcal{T}$, therefore, the second-order derivative is a non-negative constant, indicating the relationship between RTM price $\lambda_t$ and net demand $x_{t}$ is linear, which finishes the proof.

\end{document}